\documentclass[12pt, draftclsnofoot, onecolumn]{IEEEtran}

\usepackage{amsmath}
\usepackage{amssymb}
\usepackage{amsfonts}
\usepackage{amsthm}
\usepackage{color}
\usepackage{bm}
\usepackage{graphicx}
\usepackage{float}
\usepackage{soul}
\usepackage{subfigure}
\usepackage{cite}
\usepackage{algorithm,algpseudocode}
\usepackage{setspace}
\usepackage{bbm}
\usepackage{autobreak}
\usepackage{comment}
\usepackage[normalem]{ulem}
\newtheorem{theorem}{Theorem}
\newtheorem{lemma}{Lemma}
\newtheorem{remark}{Remark}
\newtheorem{assumption}{Assumption}

\newtheorem{proposition}{Proposition}

\algnewcommand{\Inputs}[1]{%
  \State \textbf{Inputs:} 
  \parbox[t]{.8\linewidth}{\raggedright #1}
}
\algnewcommand{\Initialize}[1]{%
  \State \textbf{Initialization:}
  \parbox[t]{.8\linewidth}{\raggedright #1}
}

\begin{document} 

\title{\huge Distributed Over-the-air Computing for Fast Distributed Optimization: Beamforming Design and Convergence Analysis}

\author{Zhenyi~Lin, Yi~Gong, and Kaibin~Huang
\thanks{Z. Lin and K. Huang are  with the Dept. of Electrical and Electronic Engr. at The University of Hong Kong, Hong Kong. Z. Lin is also with the Dept. of EEE at Southern University of Science and Technology, China. Y. Gong is with the same department. Contact: K. Huang (Email: huangkb@eee.hku.hk), Y. Gong (Email: gongy@sustech.edu.cn).}

}

\maketitle

\IEEEpeerreviewmaketitle

\begin{abstract}
Distributed optimization concerns the optimization of a common function in a distributed network, which finds a wide range of  applications ranging from machine learning to vehicle platooning. Its key operation is to aggregate all \emph{local state information} (LSI) at devices to update their states. The required extensive message exchange and many iterations cause a communication bottleneck when the LSI is high dimensional or at high mobility. To overcome the bottleneck, we propose in this work the framework of distributed \emph{over-the-air computing} (AirComp) to realize a one-step aggregation for distributed optimization by exploiting simultaneous multicast beamforming of all devices and the property of analog waveform superposition of a multi-access channel. Equivalently, the technique superimposes multiple instances of conventional AirComp processes, giving rise to the challenge of jointly designing multicast beamforming at devices to rein in errors due to interference and channel distortion. We consider two design criteria. The first one is to minimize the sum AirComp error (i.e., sum mean-squared error (MSE)) with respect to the desired average-functional values. An efficient solution  approach is proposed by transforming the non-convex beamforming problem into an equivalent concave-convex fractional program and solving it by nesting convex programming into a bisection search. The second criterion, called zero-forcing (ZF) multicast beamforming, is to force the received over-the-air aggregated signals at devices to be equal to the desired functional values. In this case, the optimal beamforming admits closed form. Both the MMSE and ZF beamforming exhibit a \emph{centroid} structure resulting from averaging columns of conventional MMSE/ZF precoding. Last, the convergence of a classic distributed optimization algorithm is analyzed. The distributed AirComp is found to accelerate convergence by dramatically reducing communication latency. Another key finding is that the ZF beamforming outperforms the MMSE design as the latter is shown to cause bias in subgradient estimation. 
\end{abstract}

\section{Introduction}
Distributed optimization concerns the optimization of a common function in a distributed network comprising a cluster of edge devices connected by \emph{device-to-device} (D2D) links \cite{rabbat2005quantized}. The broad field covers two popular areas: \emph{distributed machine learning} aiming at leveraging local data and computer resources at edge devices to train a global artificial-intelligence (AI) model \cite{zhou2019edge}, and \emph{distributed consensus} finding a wide range of applications in, for example, swarms of drones or robots and vehicle platooning \cite{letaief2021edge}. A fundamental operation in distributed optimization, which is usually implemented using an iterative gradient-descent algorithm, is to aggregate all \emph{local state information} (LSI) and use the result to update the states of all devices in each round. The required extensive message exchange and many rounds cause a communication bottleneck when the LSI is high dimensional (e.g., distributed learning) or at high mobility (e.g., drones). To overcome the bottleneck, we propose in this work the framework of distributed \emph{over-the-air computing} (AirComp) to realize a \emph{one-step} aggregation for distributed optimization by exploiting simultaneous multicast beamforming by all devices and the property of analog waveform superposition of a multi-access channel. To develop the framework, beamforming designs to rein in channel distortion are presented and the convergence of distributed optimization adopting distributed AirComp is studied.

Recent years have witnessed fast-growing interests in distributed machine learning as driven by the distillation of enormous mobile data into artificial intelligence (AI) to power next-generation applications ranging from industrial automation to smart cities and extended reality. Relevant research mainly focuses on the efficient deployment of federated learning, arguably the most popular distributed learning framework, at the network edge, giving rise to the fast growing area of \emph{federated edge learning} (FEEL) \cite{lim2020federated}. FEEL gains popularity for its capabilities of leveraging local data while helping to preserve their privacy and distributed computation resources. Typically implemented in networks with a star topology, the FEEL algorithm iterates between 1) aggregation at an edge server over local models, which are updated by edge devices using local data and uploaded over wireless channels, to update a global model, and 2) broadcast of the model to all devices for updating, until the global model converges. The main challenge confronting efficient FEEL is the communication bottleneck caused by uploading high-dimensional local modes (or stochastic-gradients) from potentially many devices over a multi-access channel. The required large number of rounds/iterations (e.g., tens to hundreds of rounds)  exacerbates the issue. Attempts to overcome the bottleneck have led to the proposal of different techniques   including radio resource management \cite{chen2020joint}, \cite{zeng2020energy}, model quantization \cite{zhu2020one}, \cite{du2020high}, and device scheduling \cite{yang2019scheduling}, \cite{ren2020scheduling}. One particular class of techniques of our interest, known as \emph{over-the-air FEEL}, features the application of AirComp to realize over-the-air model aggregation in FEEL \cite{chen2021distributed, zhu2021over,amiri2020machine,yang2020federated}. The principle underpinning AirComp (as well as over-the-air FEEL) is to exploit the waveform superposition property such that the signal received at the server approximates a desired aggregation function of linear analog modulated data (e.g., local models/gradients) simultaneously transmitted by devices \cite{zhu2021over, zhu_mimo_2019}. 

Most recently, researchers have studied distributed FEEL targeting a cluster of devices without coordination by a sever and connected by D2D links \cite{xing2021federated,shi2021over,ozfatura2020decentralized}. The original techniques for server-assisted FEEL can be adopted by arranging the devices to take turn or use  orthogonal channels to play the role of edge server \cite{shi2021over,savazzi2020joint}. As the resultant sequential aggregation is time consuming, attempts on realizing parallel operations have been made by selecting multiple weakly  coupled clusters of devices to perform simultaneous intra-cluster aggregation \cite{xing2021federated,ozfatura2020decentralized}. Such approaches are ineffective for a single cluster of devices with tightly coupled links such as a drone swarm or a vehicle platoon, and still require multiple time slots to complete aggregation over all devices. Fundamentally, the drawback of the existing approaches is rooted in attempting to orthogonalize multiple aggregation processes in distributed optimization. On the contrary, we advocate fusing them into a single multi-aggregation process to enable a \textit{one-step} distributed aggregation over all devices. Thereby, the resultant design, termed \textit{distributed AirComp}, supports fast distributed FEEL and distributed optimization at large. 

Consider the aggregation operation of distributed optimization among $K$ devices. The goal of designing distributed AirComp is to realize \textit{one-step} updating of the states of all devices via over-the-air aggregation over their LSI. To this end, all devices simultaneously multicast LSI to their peers and at the same time receive aggregated LSI via full-duplex communication \cite{cirik2014achievable}. We propose provisioning devices with transmit antenna arrays to enable multicast beamforming for reining in the sum AirComp error. On the other hand, the use of receive arrays can support aggregation beamforming as studied in \cite{zhu_mimo_2019}. To simplify our design, we assume single receive antenna at devices. The extension of the current design to include aggregation beamforming is straightforward but make the design tedious without new insight. Usually targeting a single-cell system,  traditional multicast beamforming at the base station aims to efficiently deliver information to multiple receivers under their quality-of-service requirements. Mathematically,  the beamforming design can be formulated as an optimization problem with the objective of minimizing the required transmission resources at the base station (e.g., power and array size) under the constraints of receive signal-to-noise ratios (SNRs) \cite{sidiropoulos2006transmit,mehanna2013joint}. Another popular “max-min” formulation targets maximizing the lowest rate or receive SNR among the receivers under a transmission power constraint \cite{sun2004transmit,lee2012new}. Consider the context of distributed AirComp. Define the AirComp error as the deviation of an aggregated signal from the desired average-functional value due to channel fading and noise. Then the multicast beamforming from the perspective of an individual device aims to minimize the sum AirComp error over other devices under a transmission power constraint. Solving such a problem is no more difficult than the mentioned traditional ones. However, the new challenge arises from $K$ simultaneous over-the-air aggregation processes coupling  multicast beamforming at $K$ devices. This necessitates the  joint beamforming  design that should account for all D2D-channel states in the system. To be precise, the multicast radiation patterns of $K$ devices should be jointly optimized under individual power constraints such that their over-the-air superposition leads to the minimization of sum AirComp error. AirComp requires the aggregated signal arriving at a device to be scaled by a factor, called \emph{alignment level}, to balance approaching a desired functional value and noise amplification \cite{cao2020optimized}. The said problem is further complicated by the need of optimization over the alignment level. In contrast, transmit beamforming for single-aggregation AirComp is simple as its main purpose is to overcome the fading of an associated single-user channel, and hence is irrelevant to the current problem \cite{zhu_mimo_2019,wen2019reduced}. 

In this work, besides proposing the principle of distributed AirComp as described earlier, we design distributed multicast beamforming to materialize the principle and further apply the design to distributed optimization. The key contributions and findings are summarized as follows. 

\begin{itemize}
    \item \textit{MMSE beamforming for distributed AirComp}: Consider the design criterion of minimizing the sum AirComp error defined as the sum \textit{mean-squared error} (MSE) over all devices with respect to the desired average-functional values. We propose an efficient approach for optimally solving the mentioned problem of distributed multicast beamforming. Without compromising the optimality, the optimal alignment factor conditioned on beamforming is derived in a closed form and substituted into the original problem. Even though the resultant problem is still non-convex, it can be transformed into an equivalent \text{concave-convex fractional program}. The transformation admits an efficient solution method that nests solving a convex problem into a bisection search. The results reveal that the optimal multicast beamforming at a device is steered along the \emph{centroid} of the column vectors of a traditional MMSE precoder for the associated D2D multiuser channel to the peers, thereby balancing  their AirComp errors; at least one device transmits with full power while some devices transmit with partial power.
    
    \item \emph{Zero-forcing (ZF) beamforming for distributed AirComp}: Consider the design criterion of forcing all receive signal power to approach a uniform level for the purpose of aggregation without attempting to avoid potential noise amplification as in the MMSE case. To minimize the resultant sum AirComp error, the ZF multicast beamformers conditioned on the alignment factor can be reduced into a single-device design with the solution derived in a closed form. It is observed to have a similar centroid  form as the MMSE counterpart but is based on a traditional ZF precoder. Given the beamformers, the alignment factor is obtained as the minimum of a derived expression over devices, which represents an attempt to cope with the weakest set of D2D links limiting the performance of distributed AirComp. 
    
    \item \emph{Convergence of distributed optimization}: The preceding  distributed AirComp framework is applied to implement distributed optimization over D2D links, which is based on the widely-used distributed dual averaging algorithm. The convergence analysis shows that AirComp errors induce bias terms in  the gap between the minimized loss function and its ground truth. The key finding is that the MMSE and ZF designs for distributed multicast beamforming lead to \emph{biased} and \emph{unbiased} estimations of ground-truth stochastic subgradients at devices, respectively. Consequently, at a low-to-medium  receive SNR, the former results in a converged test accuracy substantially lower than that in the ideal (noise-free) case while the loss of the ZF counterpart is negligible. This is opposite to the fact that ZF is sub-optimal in terms of minimizing the sum MMSE AirComp error. For both designs, the said loss is shown to diminish as the transmit SNR grows by being inversely proportional to its square root. 
\end{itemize}

The remainder of this paper is organized as follows. An overview of distributed AirComp is given in  Section II. The MMSE and ZF designs of distributed multicast  beamforming are presened  in Sections III and IV, respectively. The application of distributed AirComp to distributed optimization is studied  in Section V. Experimental results are presented in Section VI, followed by concluding remarks in Section VII.

\section{Overview of Distributed AirComp} \label{sec:overview}
\subsection{System Model} 
As illustrated in Fig. \ref{Fig: dis_sys},  the considered distributed system comprises $K$ edge devices without coordination by a server. Each device is equipped with $N_t$  transmit antennas and one single receive  antenna to support  multicast  beamforming and full-duplex communication. The number of transmit antennas is assumed sufficiently large, i.e., $(N_t\geqslant K-1)$, to provide enough degrees-of-freedom (DoF) for each device to support simultaneous aggregations at $(K-1)$ peers. For full-duplex communication, it is assumed that a  transmitter is perfectly decoupled from a receiver at the same device via passive and/or active self-inference cancellation \cite{zhang2015full} as illustrated in Fig.~\ref{Fig: dis_sys}.   

\begin{figure}[t!]
    \centering
    \includegraphics[width=.7\linewidth]{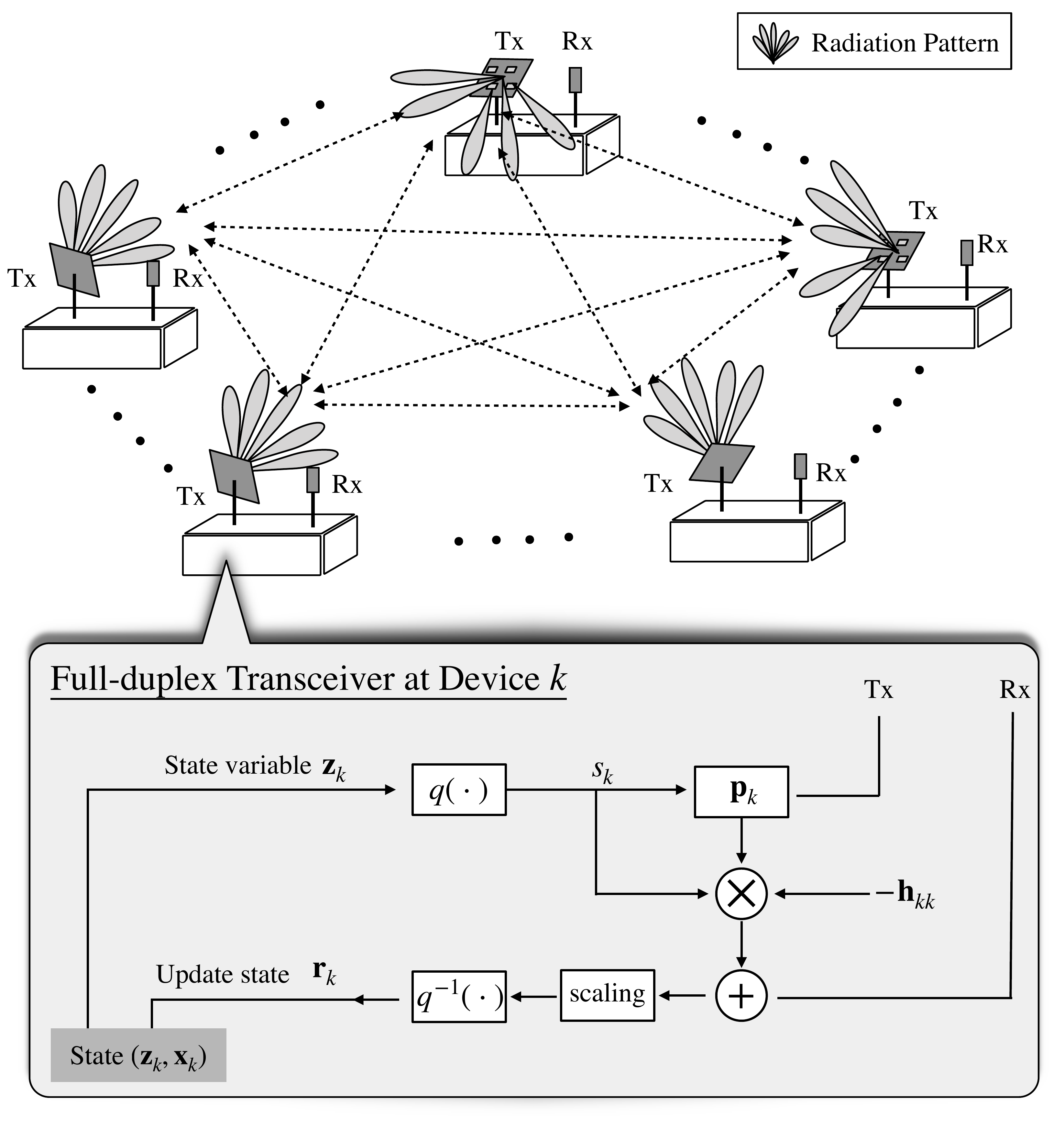}
    \caption{Distributed AirComp system featuring distributed multicast beamforming and full-duplex communication.}
    \label{Fig: dis_sys}
\end{figure}

As the process of  distributed optimization spans multiple rounds, time is divided into rounds with a fixed duration denoted as $T_\text{cmm}$; each round is further divided into $D$ symbol slots. Channels are assumed to be static in each round and varying over rounds. For simplicity, each link is assumed frequency non-selective with the bandwidth represented by $
B$, giving the symbol duration $T_s = \frac{1}{B}$. Global  \textit{channel state information} (CSI) is available at each device via estimation exploiting channel reciprocity or efficient feedback \cite{zhu_mimo_2019}. The required training overhead is assumed negligible compared with high-dimensional LSI exchange. In each round, each device, say device $k$,  transmits a symbol vector, denoted as $\mathbf{s}_k$, that comprises $D$ symbols and spans the round duration of $T_\text{cmm} = DT_s$. 

\subsection{Distributed AirComp Design}

An iterative algorithm for distributed optimization (see Appendix~\ref{App_prelimarary}) comprises multiple communication rounds. In each round, all devices broadcast their LSI simultaneously using linear analog modulation and multicast beamforming, and at the same time receive the over-the-air aggregated signals from other devices. All devices are assumed to be synchronized via a common clock. Consider an arbitrary round, for each device, say device $k$, let $\mathbf{p}_k \in \mathbb{C}^{N_t\times 1}$ denote the multicast beamformer under the power constraint, $\|\mathbf{p}_k\|^2 \leqslant P_0$, with $P_0$ representing the maximum power. Then the over-the-air aggregated signal vector received by the device, denoted as $\mathbf{y}_k$, is given as 
\begin{equation}
    \mathbf{y}_k = \sum\limits_{\ell=1, \ell\neq k}^K \mathbf{h}^{H}_{\ell k}\mathbf{p}_\ell \mathbf{s}_\ell+\tilde{\mathbf{w}}_k,
\end{equation} 
where $\mathbf{s}_\ell$ denotes the symbol row vector transmitted by device $\ell$, $\mathbf{h}_{\ell k} \in \mathbb{C}^{N_t\times 1}$ is the channel vector for the link from device $\ell$ to $k$,  and $\tilde{\mathbf{w}}_k\sim\mathcal{CN}(\mathbf{0},\sigma^2\mathbf{I})$ is channel noise. Upon receiving the aggregated signal, it is scaled as shown below to yield the  desired  aggregation  form: 
\begin{equation}
    \mathbf{r}_k = \frac{\mathbf{y}_k}{(K-1)\sqrt{\eta}},
\end{equation}
where $\eta$ is the alignment factor that helps to reduce the AirComp error. 

Consider round $n$ and device $k$. The transmitted symbol vector is a linear analog modulated local state variable ($D\times 1$ vector), $\mathbf{z}_k(n)$ (see Appendix~\ref{App_prelimarary}). Normalization of the variable is necessary for meeting the transmission power constraint and avoiding a nonzero DC level, which unnecessarily increases the power. For distributed optimization with a general objective function, typically the statistics of the elements of $\mathbf{z}_k(n)$ are different and also vary over rounds (see e.g., \cite{zhang2021gradient}). Define the round-dependent mean and variance of local state variables, which are assumed uniform for all devices, as $M(n) = \mathsf{E}\left[\frac{1}{D}\sum_d z^{(d)}_k(n)\right]$ and $V^2(n) = \mathsf{Var}\left[\frac{1}{D}\sum_d z^{(d)}_k(n)\right]$. These parameters can be estimated offline and stored at devices \cite{zhang2021gradient}.  Let $q(\cdot)$ denote the normalization function such as $\mathbf{s}_k(n) = q(\mathbf{z}_k(n)) = \frac{\mathbf{z}_k(n)-M(n)}{V(n)}$ for all $k$.  Consequently,  $\mathsf{Var}\left[\frac{1}{D}\sum_d s^{(d)}_k(n)\right] = 1$ and $\mathsf{E}[\frac{1}{D}\sum_d s^{(d)}_k(n)]=0$ with $s^{(d)}_k(n)$ being the $d$th element of $\mathbf{s}_k(n)$. Note that $\mathsf{E}[\mathbf{s}_k(n)] \neq \mathbf{0}$ due to the non-uniform distributions of the elements in $\mathbf{s}_k(n)$. Then the  received signal is de-normalized as 
\begin{equation}\label{eq:vec_rec}
    \mathbf{r}_k = q^{-1}\left(\frac{\mathbf{y}_k}{(K-1)\sqrt{\eta}}\right) = \frac{V}{(K-1)\sqrt{\eta}}\sum_{\ell =1, \ell \neq k}^K\mathbf{h}_{ \ell k  }^{H} \mathbf{p}_{\ell} \mathbf{s}_\ell+M+\mathbf{w}_k,
\end{equation}
where the channel noise $\mathbf{w}_k\sim \mathcal{CN}\left(\mathbf{0},\frac{V^2\sigma^2}{(K-1)^2\eta}\mathbf{I}\right)$.

\subsection{Distributed Optimization and Performance Metric}

In distributed optimization, there exist local objective functions associated with individual  devices. The function for device $k$ is denoted as  $f_{k}: \mathbb{R}^{D} \rightarrow \mathbb{R}$.
 Following the common assumptions, each function is assumed to be convex and hence sub-differentiable, but not necessarily smooth. The global objective, denoted as $f$, is related to local objectives by $f(\mathbf{x}) = \frac{1}{K} \sum_{k=1}^K f_{k}(\mathbf{x})$ for a common state of devices. Then the goal of distributed optimization is to find the optimal common state $\mathbf{x}^\star$, which represents a consensus among devices, that minimizes the objective. Mathematically, 
\begin{equation}\label{eq:global_obj}
    \mathbf{x}^\star = \arg\min _{\mathbf{x} \in \mathcal{X}} f(\mathbf{x}) = \arg\min _{\mathbf{x} \in \mathcal{X}} \frac{1}{K} \sum_{k=1}^K f_{k}(\mathbf{x}),
\end{equation}
where the state space  $\mathcal{X}$ is a closed, $D$-dimensional convex set. 

A  classic algorithm, distributed dual averaging, for solving \eqref{eq:global_obj} is discussed in  Appendix~\ref{App_prelimarary}. In this algorithm, the local state of each device, say device $\ell$,  contains a primal variable and a dual variable, which are denoted as $\mathbf{x}_\ell(n)\in \mathbb{R}^D$ and $\mathbf{z}_\ell(n)\in \mathbb{R}^D$ for round $n$, respectively. Each round, say round $n$,  of the algorithm comprises a three-step procedure  simultaneously executed at each device, say device $\ell$: (1) calculate a stochastic subgradient $\tilde{\bm{g}}_\ell(n)$ of its local loss function at $\mathbf{x}_\ell(n)$; (2)  aggregate the dual variables of  all other devices and apply the result together with $\tilde{\bm{g}}_\ell(n)$ to update the local dual variable as $\mathbf{z}_\ell(n+1)$; (3) project $\mathbf{z}_\ell(n+1)$ onto the feasible set $\mathcal{X}$ to obtain the updated  primal variable  $\mathbf{x}_\ell(n+1)$ \cite{duchi_dual_2012}. Only Step (2) requires information exchange among devices. We focus on its one-step implementation using distributed AirComp in the preceding subsection. To this end, \eqref{eq:step:1} for Step (2) at device $\ell$ is modified as \begin{equation}\label{eq:dual_updates}
    \mathbf{z}_\ell(n+1) = (1-\beta)\mathbf{z}_\ell(n) + \beta \mathbf{r}_\ell(n) + \tilde{\bm{g}}_\ell(n)
\end{equation}
where the received signal $\mathbf{r}_\ell(n)$ is given in \eqref{eq:vec_rec} and $\beta$ is a constant weight (see Appendix~\ref{App_prelimarary}). Note that the ideal received signal, namely the ground truth,  is  $\frac{1}{K-1}\sum\limits_{k\neq \ell}^K \mathbf{z}_k(n)$. Then the sum AirComp error is suitably defined as the sum  \textit{mean-square-error} (MSE) between the received signal and its ground-truth: 
\begin{equation}
    \begin{aligned}[b]\label{ori_pro}
        \mathrm{MSE} &=\mathsf{E}\left[\sum_{\ell=1}^{K}\sum_{d=1}^{D}\left(r_\ell^{(d)}-\frac{1}{K-1} \sum_{k\neq \ell}^{K} z_{k}^{(d)}\right)^{H}\left(r_\ell^{(d)}-\frac{1}{K-1} \sum_{k\neq \ell}^{K} z_{k}^{(d)}\right)\right] \\
        &=\frac{V^2D}{(K\!-\!1)^{2}}\mathsf{E}\left[\sum_{\ell=1}^{K}\left(\sum_{k\neq \ell}^{K}\frac{\mathbf{h}_{k \ell }^{H} \mathbf{p}_{k}}{\sqrt{\eta}}s_k^{(d)}\!-\!\sum_{k\neq \ell}^{K} s_{k}^{(d)}\!+\!{w}_k^{(d)}\right)^{H}\left(\sum_{k\neq \ell}^{K}\frac{\mathbf{h}_{k \ell }^{H} \mathbf{p}_{k}}{\sqrt{\eta}}s_k^{(d)}\!-\!\sum_{k\neq \ell}^{K} s_{k}^{(d)}\!+\!{w}_k^{(d)}\right)\right] \\
 &=\frac{V^2D}{(K-1)^{2}} \sum_{\ell=1}^{K}\left[\sum_{k\neq \ell}^{K}\left\|\frac{\mathbf{h}_{k \ell }^{H} \mathbf{p}_{k}}{\sqrt{\eta}} -1\right\|^{2}+\frac{\sigma^{2}}{ \eta}\right] \\
        &=\frac{V^2D}{(K-1)^{2}} \sum_{k=1}^{K} \sum_{\ell\neq k}^{K}\left\|\frac{\mathbf{h}_{k \ell }^{H} \mathbf{p}_{k}}{\sqrt{\eta}} -1\right\|^{2}+\frac{KDV^2\sigma^{2}}{(K-1)^{2}} \frac{1}{\eta},
    \end{aligned}
\end{equation}
where the expectation is taken over the distribution of the transmitted symbol and the received noise $w_k^{(d)}$ and the round  index $(n)$ is omitted here for brevity.

\section{MMSE Design of Distributed Multicast Beamforming}\label{sec:MMSE}
In this section, distributed multicast  beamforming for distributed AirComp as well as the alignment factor are  jointly optimized under the criterion of minimum sum AirComp error. The result is  called  minimum MSE (MMSE) multicast beamforming. An efficient solution approach is developed  based on fractional programming. 

\subsection{Optimal Distributed Multicast  Beamforming}
Consider an arbitrary round with its index omitted for ease of notation. With the sum AirComp error in \eqref{ori_pro}, the mentioned optimization problem is formulated as follows 
\begin{equation}
    \begin{aligned}
       &\min_{\{\mathbf{p}_k\}, \eta}&& \text{MSE}(\mathbf{p}_1, \cdots, \mathbf{p}_k, \eta) \\
        &\mathrm{s.t.} && \|\mathbf{p}_k\|^2\leqslant P_0,~~\forall k,\\
        &&& \eta \geq 0.
    \end{aligned} 
    \tag{P1}\label{P1}
\end{equation}
Due to the coupling of $\{\mathbf{p}_k\}$ and $\eta$, Problem \eqref{P1} is non-convex and hence directly solving it  is difficult. To overcome the difficulty, an alternative approach is presented as follows. 

First, given $\{\mathbf{p}_k\}$, Problem \eqref{P1} is reduced to the following conditional problem: 
\begin{equation}\label{eq:given_pi}
    \begin{aligned}
     \min_{\eta\geq 0}\quad \frac{V^2D}{(K-1)^{2}} \sum_{k=1}^{K} \sum_{ \ell\neq k}^{K}\left\|\frac{\mathbf{h}_{k \ell }^{H} \mathbf{p}_{k}}{\sqrt{\eta}} -1\right\|^{2}+\frac{KDV^2\sigma^{2}}{(K-1)^{2}} \frac{1}{\eta}.
    \end{aligned}
\end{equation}
The objective in \eqref{eq:given_pi} is found to be convex and then the optimal $\eta^{\star}_\text{mmse}$ is obtained as
\begin{equation}\label{eq:given_pi_eta}
    \eta^{\star}_\text{mmse} = \left(
    \frac{2K\sigma^2+2\sum_{\ell=1}^K\sum_{k\neq \ell}^K \mathbf{p}_k^H\mathbf{h}_{k \ell}\mathbf{h}_{k \ell}^H\mathbf{p}_k}{\sum_{\ell = 1}^K\sum_{k\neq \ell}^K( \mathbf{h}_{k \ell}^H\mathbf{p}_k+\mathbf{p}_k^H\mathbf{h}_{k \ell})}
    \right)^2.
\end{equation}
Next, we proceed to optimize $\{\mathbf{p}_k\}$. Substituting $\eta^{\star}_\text{mmse}$ in \eqref{eq:given_pi_eta} into Problem \eqref{P1} gives 
\begin{equation}
\begin{aligned}
\min _{\left\{\mathbf{p}_{k}\right\}} 
& \quad \frac{K}{K-1}- \frac{1}{(K-1)^2}\cdot\frac{\left(\sum_{\ell = 1}^K\sum_{k\neq \ell}^K( \mathbf{h}_{k \ell}^H\mathbf{p}_k+\mathbf{p}_k^H\mathbf{h}_{k \ell})\right)^{2}}{4(K\sigma^{2}+\sum_{\ell=1}^{K}\sum_{k\neq \ell}^{K}\mathbf{h}_{k \ell }^{H} \mathbf{p}_{k}\mathbf{p}_{k}^{H}\mathbf{h}_{k \ell })} \\
\text { s.t. }& \quad\left\|\mathbf{p}_{k}\right\|^{2} \leqslant P_{0}, \quad \forall k.
\end{aligned}
\tag{P1.1} \label{P1.1}
\end{equation}
Problem \eqref{P1.1} remains non-convex. Nevertheless, the result in the following lemma can transform this problem into a tractable equivalent form.

\begin{lemma}\label{lemma:ccfp}\emph{
Given $\sum_{\ell = 1}^K\sum_{k\neq \ell}^K( \mathbf{h}_{k \ell}^H\mathbf{p}_k+\mathbf{p}_k^H\mathbf{h}_{k \ell})\geqslant 0$, Problem \eqref{P1.1} is equivalent to the following \textit{concave-convex fractional program}:
\begin{equation}
\begin{aligned}
\max_{\left\{\mathbf{p}_{k}\right\}} 
& \quad \frac{\left(\sum_{\ell = 1}^K\sum_{k\neq \ell}^K( \mathbf{h}_{k \ell}^H\mathbf{p}_k+\mathbf{p}_k^H\mathbf{h}_{k \ell})\right)}{2\sqrt{(K\sigma^{2}+\sum_{\ell=1}^{K}\sum_{k\neq \ell}^{K}\mathbf{h}_{k \ell }^{H} \mathbf{p}_{k}\mathbf{p}_{k}^{H}\mathbf{h}_{k \ell })}} \\
\text { s.t. }
& \quad\left\|\mathbf{p}_{k}\right\|^{2} \leqslant P_{0} \quad \forall k, \\ 
& \quad \sum_{\ell = 1}^K\sum_{k\neq \ell}^K( \mathbf{h}_{k \ell}^H\mathbf{p}_k+\mathbf{p}_k^H\mathbf{h}_{k \ell})\geqslant 0.
\end{aligned}
\tag{P1.2}\label{P1.2}
\end{equation}
}
\end{lemma}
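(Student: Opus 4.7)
The plan is to view \eqref{P1.1} as a maximization of the ratio $A^{2}/(K\sigma^{2}+B)$, where I abbreviate $A := \sum_{\ell=1}^{K}\sum_{k\neq\ell}(\mathbf{h}_{k\ell}^{H}\mathbf{p}_{k}+\mathbf{p}_{k}^{H}\mathbf{h}_{k\ell})$ and $B := \sum_{\ell=1}^{K}\sum_{k\neq\ell}\mathbf{h}_{k\ell}^{H}\mathbf{p}_{k}\mathbf{p}_{k}^{H}\mathbf{h}_{k\ell}$, and then to use the sign precondition $A\geqslant 0$ to pass to the square-root ratio in \eqref{P1.2}. First I would drop the additive constant $K/(K-1)$ and absorb the positive multiplier $1/[4(K-1)^{2}]$, which converts minimizing the objective of \eqref{P1.1} into the equivalent problem of maximizing the nonnegative ratio $A^{2}/(K\sigma^{2}+B)$ under the same power constraints.

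Next, I would invoke the hypothesis $A\geqslant 0$, which is retained in \eqref{P1.2} as an explicit linear inequality constraint carried along from \eqref{P1.1}. Since $\sigma^{2}>0$ ensures $K\sigma^{2}+B>0$ for every feasible $\{\mathbf{p}_k\}$, the quantity $A/\sqrt{K\sigma^{2}+B}$ is well defined and nonnegative, and its square coincides with $A^{2}/(K\sigma^{2}+B)$. Because $t\mapsto t^{2}$ is a strictly increasing bijection on $[0,\infty)$, the two ratios attain their maxima on the same set of beamformers. This monotone-transformation step is the only place the sign hypothesis is used, and it is precisely what rewrites the objective of \eqref{P1.1} into that of \eqref{P1.2} (up to the harmless positive factor $1/2$).

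Finally, I would verify the \emph{concave-convex} label attached to \eqref{P1.2}. The numerator $A$ is real-valued and affine in $(\mathbf{p}_{1},\ldots,\mathbf{p}_{K})$, hence simultaneously concave and convex. For the denominator, I would stack the beamformers into $\bar{\mathbf{p}}:=[\mathbf{p}_{1}^{T},\ldots,\mathbf{p}_{K}^{T}]^{T}$ and write $B=\sum_{k}\mathbf{p}_{k}^{H}\mathbf{M}_{k}\mathbf{p}_{k}$ with $\mathbf{M}_{k}:=\sum_{\ell\neq k}\mathbf{h}_{k\ell}\mathbf{h}_{k\ell}^{H}\succeq \mathbf{0}$. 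Then $\sqrt{K\sigma^{2}+B}$ is the Euclidean norm of the affine vector obtained by concatenating the blocks $\mathbf{M}_{k}^{1/2}\mathbf{p}_{k}$ with the constant entry $\sqrt{K\sigma^{2}}$, and is therefore convex in $\bar{\mathbf{p}}$. Combined with the convex power and sign constraints, this produces the desired concave-over-convex Dinkelbach form.

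The only non-routine step is the passage to the square root: a naive argument would require convexity of $\sqrt{K\sigma^{2}+B}$ to follow from convexity of $K\sigma^{2}+B$ alone, which is not true in general. The explicit norm representation above is what rescues convexity and makes standard fractional-programming machinery applicable in the sequel; the remainder is algebraic manipulation and careful bookkeeping of the sign hypothesis.
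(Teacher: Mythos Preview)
Your argument is correct and follows the same overall plan as the paper: reduce \eqref{P1.1} to maximizing the ratio in \eqref{P1.2} via the monotone map $t\mapsto t^{2}$ on $[0,\infty)$, then verify the concave-over-convex structure. The paper handles the equivalence step only informally (``a simple variable substitution'') and spends its appendix on the convexity of the denominator, which it establishes by computing all $2K(K-1)$ eigenvalues of the Hessian of $2\sqrt{K\sigma^{2}+\sum_{k\neq\ell}(x_{k\ell}^{2}+y_{k\ell}^{2})}$ in real coordinates and checking they are positive. Your route for that step is cleaner: recognizing $\sqrt{K\sigma^{2}+B}$ as the Euclidean norm of the affine map $\bar{\mathbf{p}}\mapsto\big(\mathbf{M}_{1}^{1/2}\mathbf{p}_{1},\ldots,\mathbf{M}_{K}^{1/2}\mathbf{p}_{K},\sqrt{K\sigma^{2}}\big)$ gives convexity immediately and avoids the Hessian bookkeeping. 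Both arguments yield the same conclusion; yours is shorter and also makes explicit the equivalence step that the paper leaves to the reader.
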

\begin{proof}
    See Appendix \ref{App_ccfp}.
\end{proof}
It can be straightforwardly proved by a simple variable substitution. When $\sum_{\ell = 1}^K\sum_{k\neq \ell}^K( \mathbf{h}_{k \ell}^H\mathbf{p}_k+\mathbf{p}_k^H\mathbf{h}_{k \ell})< 0$, the optimal solution of \eqref{P1.1} can be written as $\tilde{\mathbf{p}}_k^{\star} = -\mathbf{p}_k^{\star}$, where $\mathbf{p}_k^{\star}$ is the optimal solution of \eqref{P1.2}. Since $\tilde{\mathbf{p}}_k^{\star}$ and $ \mathbf{p}_k^{\star}$ correspond to the same optimal value of the objective of \eqref{P1.1}, considering the case in Lemma \ref{lemma:ccfp} where $\sum_{\ell = 1}^K\sum_{k\neq \ell}^K( \mathbf{h}_{k \ell}^H\mathbf{p}_k+\mathbf{p}_k^H\mathbf{h}_{k \ell}) \geqslant 0$ is sufficient for the purpose of solving  Problem \eqref{P1.1}. This allows a property of factional program to be applied. To this end, define the \textit{upper contour set} of a function $f:\mathbb{R}^n\rightarrow \mathbb{R}$ at $c = f(x_0)$ for some $x_0\in\mathbb{R}^n$ as the set
$\left\{ x\in\mathsf{X}:~f(x)\geqslant c\right\}$.  One useful property of the concave-convex fractional program in Lemma \ref{lemma:ccfp} is its strict quasi-concavity\cite{schaible1976fractional}. Specifically,  the upper contour sets of the objective of Problem \eqref{P1.2}, are \emph{convex}. By introducing an auxiliary variable $\alpha (\alpha>0)$, which corresponds to the aligned fraction, Problem \eqref{P1.2} can be written as the convex problem of maximization over $\alpha$ as follows: 
\begin{equation}
\begin{aligned}
\max _{\alpha,\left\{\mathbf{p}_{k}\right\},} 
& \quad \alpha \\
\text { s.t. }
&
\quad\left\|\mathbf{p}_{k}\right\|^{2} \leqslant P_{0}, \quad \forall k, \\
&
\quad \alpha\leqslant\frac{\left(\sum_{\ell = 1}^K\sum_{k\neq \ell}^K( \mathbf{h}_{k \ell}^H\mathbf{p}_k+\mathbf{p}_k^H\mathbf{h}_{k \ell})\right)}{2\sqrt{(K\sigma^{2}+\sum_{\ell=1}^{K}\sum_{k\neq \ell}^{K}\mathbf{h}_{k \ell }^{H} \mathbf{p}_{k}\mathbf{p}_{k}^{H}\mathbf{h}_{k \ell })}}.
\end{aligned}
\tag{P1.3}\label{P1.3}
\end{equation}

To solve the above problem requires consideration of a related problem of transmission power minimization described  as follows. Define the maximum power as $p_{\text{max}} = \max_k \|\mathbf{p}_k\|_2^2$. Then given $\alpha$, the mentioned problem is given as 
\begin{equation}
\begin{aligned}
\min _{\{\mathbf{p}_{k}\}, p_{\text{max}}}
& \quad p_{\text{max}} \\
\text { s.t. }&
\quad 2\alpha\sqrt{(K\sigma^{2}+\sum_{\ell=1}^{K}\sum_{k\neq \ell}^{K}\mathbf{h}_{k \ell }^{H} \mathbf{p}_{k}\mathbf{p}_{k}^{H}\mathbf{h}_{k \ell })}\leqslant \left(\sum_{\ell = 1}^K\sum_{k\neq \ell}^K( \mathbf{h}_{k \ell}^H\mathbf{p}_k+\mathbf{p}_k^H\mathbf{h}_{k \ell})\right), \\
& 
\quad
\|\mathbf{p}_k\|_2^2 \leqslant p_{\text{max}}, \quad \forall k.
\end{aligned}
\tag{P1.4}\label{P1.4}
\end{equation}
The solution of Problem~\eqref{P1.4} is denoted as a function of $\alpha$, $p^{\star}(\alpha)$. Note that given the desired aligned fraction $\alpha$, Problem \eqref{P1.4} is feasible if and only if the solution for Problem \eqref{P1.4} satisfies $p^{\star}(\alpha)\leqslant P_0$. Two useful lemmas for relating Problems \eqref{P1.3} and \eqref{P1.4} are given as follows, which are proved in Appendices \ref{App_mon_proof} and \ref{App_con_proof}, respectively.

\begin{lemma}\label{lemma:mon_proof} \emph{The minimum transmission  power $p^{\star}(\alpha)$ over devices is a monotonously increasing function of $\alpha$.
}
\end{lemma}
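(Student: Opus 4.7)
The plan is to establish monotonicity by a feasibility-set inclusion argument and, if strict monotonicity is intended, to sharpen it with a scaling perturbation that exploits the noise term $K\sigma^{2}$. The first step is to square the first constraint of \eqref{P1.4}, which is permissible because the sign condition carried over from \eqref{P1.2} makes both sides nonnegative at feasible points. Writing $A(\{\mathbf{p}_k\}) = \sum_{\ell=1}^{K}\sum_{k\neq\ell}^{K}|\mathbf{h}_{k\ell}^{H}\mathbf{p}_k|^{2}$ and $B(\{\mathbf{p}_k\}) = \sum_{\ell=1}^{K}\sum_{k\neq\ell}^{K}(\mathbf{h}_{k\ell}^{H}\mathbf{p}_k + \mathbf{p}_k^{H}\mathbf{h}_{k\ell})$, the first constraint becomes $4\alpha^{2}(K\sigma^{2}+A) \leqslant B^{2}$, isolating the $\alpha$-dependence in the left-hand side. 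For $0 \leqslant \alpha_{1} \leqslant \alpha_{2}$, any $\{\mathbf{p}_k\}$ feasible at $\alpha_{2}$ therefore remains feasible at $\alpha_{1}$; since the power constraints $\|\mathbf{p}_k\|^{2} \leqslant p_{\max}$ are independent of $\alpha$, the joint feasible set of \eqref{P1.4} shrinks with $\alpha$, and minimising $p_{\max}$ over a larger set cannot increase the optimum, so $p^{\star}(\alpha_{1}) \leqslant p^{\star}(\alpha_{2})$.

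To upgrade to strict monotonicity whenever $\alpha_{1} < \alpha_{2}$, I would perturb an optimiser $\{\mathbf{p}_k^{\star}\}$ of \eqref{P1.4} at $\alpha_{2}$ by a common scalar $t \in (0,1]$. Because $A$ is quadratic and $B$ is linear in $\{\mathbf{p}_k\}$, the scaled vector $\{t\mathbf{p}_k^{\star}\}$ satisfies the constraint at $\alpha_{1}$ iff $t^{2} \geqslant 4\alpha_{1}^{2}K\sigma^{2} / \bigl((B^{\star})^{2} - 4\alpha_{1}^{2}A^{\star}\bigr)$, and this lower bound is strictly less than one because feasibility at $\alpha_{2}$ yields $(B^{\star})^{2} \geqslant 4\alpha_{2}^{2}(K\sigma^{2}+A^{\star}) > 4\alpha_{1}^{2}(K\sigma^{2}+A^{\star})$. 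Choosing $t$ at this lower bound produces a point feasible for the $\alpha_{1}$ problem with maximum power $t^{2}p^{\star}(\alpha_{2}) < p^{\star}(\alpha_{2})$, which gives $p^{\star}(\alpha_{1}) < p^{\star}(\alpha_{2})$.

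The only nontrivial piece of bookkeeping is the positivity of $(B^{\star})^{2} - 4\alpha_{1}^{2}A^{\star}$, which follows from the feasibility chain above provided $\sigma > 0$, a benign physical assumption. Conceptually the argument is elementary: enlarging $\alpha$ tightens the nonlinear constraint without touching the power constraints, so it cannot decrease the minimum attainable $p_{\max}$, and the strict decrease comes from the fact that the noise floor $K\sigma^{2}$ creates nontrivial scaling slack at the larger parameter value.
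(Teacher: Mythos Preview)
Your argument is correct and, in fact, slightly sharper than the paper's own proof. Both you and the paper obtain weak monotonicity by the same feasible-set inclusion idea: any beamformer tuple feasible at the larger $\alpha$ remains feasible at the smaller one, so the minimum of $p_{\max}$ cannot increase. You reach this by squaring the first constraint of \eqref{P1.4} directly, whereas the paper routes the same inclusion through the strict quasi-concavity of the objective in \eqref{P1.2} (nested upper contour sets); your route is more elementary and self-contained.

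Where the approaches differ is the strict part. The paper only argues that $p^{\star}(0)=0$ while $p^{\star}(\alpha)>0$ for $\alpha>0$, concluding merely that $p^{\star}$ is not globally constant; this does not by itself give strict monotonicity for every pair $\alpha_1<\alpha_2$. Your scaling perturbation, exploiting the noise floor $K\sigma^{2}>0$ to create genuine slack after contraction by $t<1$, closes that gap and yields $p^{\star}(\alpha_1)<p^{\star}(\alpha_2)$ for all $0\le\alpha_1<\alpha_2$. This is exactly what the bisection in Algorithm~\ref{Alg:bisec} really needs to guarantee a unique crossing of the $P_0$ threshold, so your version is the stronger statement.
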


It follows from the result in Lemma \ref{lemma:mon_proof} that the solution for Problem~\eqref{P1.3} is the maximum
alignment level, denoted as $\alpha^{\star}$,  such that  the corresponding minimum transmission power $p^{\star}(\alpha^{\star})$ is no larger than $P_0$. This suggests a solution method of Problem \eqref{P1.3} by a search for $\alpha^{\star}$ over the range of  $0<   p^{\star}(\alpha) \leqslant P_0$. This requires solving Problem \eqref{P1.4} so as to compute the function $p^{\star}(\alpha)$. To this end, the following result is useful.
\begin{lemma}\label{lemma:con_proof} \emph{Given $\alpha$, Problem \eqref{P1.4} is convex.}
\end{lemma}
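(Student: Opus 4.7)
The plan is to verify convexity of Problem~\eqref{P1.4} by checking the objective and each constraint separately, and in particular recasting the nonlinear inequality as a second-order cone (SOC) constraint. Since $\alpha$ is fixed, it behaves as a constant throughout.

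First I would note that the objective $p_{\max}$ is linear in the decision variables, hence convex. Next, each constraint $\|\mathbf{p}_k\|_2^2 \leqslant p_{\max}$ is a convex quadratic on the left-hand side bounded by a linear expression on the right-hand side, which is a standard convex (in fact rotated SOC) constraint. The only nontrivial step is the inequality
\begin{equation*}
2\alpha\sqrt{K\sigma^{2}+\sum_{\ell=1}^{K}\sum_{k\neq \ell}^{K}\mathbf{h}_{k \ell }^{H} \mathbf{p}_{k}\mathbf{p}_{k}^{H}\mathbf{h}_{k \ell }}\leqslant \sum_{\ell = 1}^K\sum_{k\neq \ell}^K\bigl( \mathbf{h}_{k \ell}^H\mathbf{p}_k+\mathbf{p}_k^H\mathbf{h}_{k \ell}\bigr).
\end{equation*}

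For this main step, I would stack the beamformers into a single vector $\mathbf{p}=[\mathbf{p}_1^H,\dots,\mathbf{p}_K^H]^H$ and collect the terms $\mathbf{h}_{k\ell}^H\mathbf{p}_k$ into the entries of $\mathbf{A}\mathbf{p}$ for an appropriate block-diagonal matrix $\mathbf{A}$. Then the argument of the square root equals $K\sigma^{2}+\|\mathbf{A}\mathbf{p}\|_2^{2}$, so the left-hand side can be rewritten as
\begin{equation*}
2\alpha\,\Bigl\|\bigl[\sqrt{K}\,\sigma,\; (\mathbf{A}\mathbf{p})^{\top}\bigr]^{\top}\Bigr\|_2,
\end{equation*}
i.e., a Euclidean norm composed with an affine map of $\{\mathbf{p}_k\}$, which is convex (and nonnegative since $\alpha>0$). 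The right-hand side simplifies to $2\sum_{\ell}\sum_{k\neq \ell}\mathrm{Re}(\mathbf{h}_{k\ell}^H\mathbf{p}_k)$, a real-linear functional of $\{\mathbf{p}_k\}$. The inequality therefore takes the canonical SOC form ``norm of affine $\leqslant$ affine,'' and is a convex constraint.

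Combining the three observations, Problem~\eqref{P1.4} is the minimization of a linear objective subject to convex constraints, hence convex, which is exactly the claim. The only delicate point, and the place where I would spend the most care in writing out, is the explicit rewriting of the square-root term as a norm of an affine function of the stacked beamformers; once that is in place, invoking standard convexity rules for SOC constraints closes the argument immediately.
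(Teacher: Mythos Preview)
Your proposal is correct and follows essentially the same approach as the paper: the paper also argues that the objective is linear, the power constraints are convex, and that the left-hand side of the first constraint ``can be regarded as a $2$-norm function'' with the right-hand side linear. Your version is simply a more explicit rendering of this same idea, carrying out the stacking into $\mathbf{A}\mathbf{p}$ and writing the square root as $\|[\sqrt{K}\sigma,\,(\mathbf{A}\mathbf{p})^\top]^\top\|_2$, which makes the SOC form transparent.
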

The convexity of Problem \eqref{P1.4} allows it to be solved efficiently using rich existing techniques for convex programming, for example,  the primal-dual method. 

In summary, Problem~\eqref{P1.3} for optimal transmit beamforming can be solved efficient by nesting a one-dimensional search over $\alpha$ and the solution of the convex  Problem \eqref{P1.4}. The detailed  algorithm is presented  in Algorithm~\ref{Alg:bisec}.

\begin{algorithm}[t!]
    \setstretch{1.5}
    \caption{Optimal Algorithm for MMSE Distributed Multicast  Beamforming}
    \label{Alg:bisec}
    \begin{algorithmic}[1]
      \Inputs{$K, \sigma, \{\mathbf{h}_{k \ell}\}$.}
      \Initialize{}
      \Statex Select $\alpha_\text{u}$ so that $\alpha = \alpha_\text{u}$ makes $p^{\star}(\alpha_\text{u})$ defined in \eqref{P1.4} larger than $P_0$.
      \Statex Select $\alpha_\text{l}$ so that $\alpha = \alpha_\text{l}$ makes $p^{\star}(\alpha_\text{l})<P_0$. 
      \While{$|\alpha_\text{u}-\alpha_\text{l}|\geqslant \epsilon$}
      \State Let $\alpha_\text{m} = (\alpha_\text{u}+\alpha_\text{l})/2$ and substitute $\alpha = \alpha_\text{m}$ into \eqref{P1.4}.
      \State Solve \eqref{P1.4} by CVX toolbox to obtain $p^{\star}(\alpha_\text{m})$ and $\{\mathbf{p}_k^{\star}\}$.
      \If{$p^{\star}(\alpha_\text{m})>P_0$} 
      \State $\alpha_\text{u} = \alpha_\text{m}$.
      \Else 
      \State $\alpha_\text{l} = \alpha_\text{m}$.
      \EndIf
      \EndWhile
      \State \Return $\{\mathbf{p}_k^{\star}\}$.
    \end{algorithmic}
\end{algorithm}

\subsection{Centroid Beamforming}
To shed light on the structure of the optimal MMSE design of distributed multicast beamformer, we consider a special case when the number of antennas $N_t=K-1$. Let $\mathbf{H}_k \in \mathbb{C}^{N_t\times(K-1) }$ be the matrix with its $\ell$-th column being  $\mathbf{h}_{k \ell}^H$ with $\ell\neq k$. In the current case, $\mathbf{H}_k$ is an invertible full-rank matrix. The convexity of Problem \eqref{P1.4} allows its KKT conditions to  be a sufficient and necessary condition for its optimal solution. Based on the KKT conditions,  Lemmas \ref{lemma:one_full} and \ref{lemma:direc} are obtained as follows with  proofs  in Appendix \ref{App_two_lemma}.
\begin{lemma}\label{lemma:one_full} \emph{At least one device performs full-power transmission, i.e. $\exists k,~ \|\mathbf{p}_k\|^2 = P_0$}.
\end{lemma}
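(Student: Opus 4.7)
The plan is to derive the claim directly from the KKT conditions of the convex problem \eqref{P1.4} evaluated at the optimal alignment level $\alpha^\star$ returned by Algorithm~\ref{Alg:bisec}. The key structural observation is that the epigraph variable $p_{\text{max}}$ is coupled to the beamformers only through the per-device bounds $\|\mathbf{p}_k\|^2 \leqslant p_{\text{max}}$, while it appears alone in the objective. Stationarity in the scalar $p_{\text{max}}$ will therefore pin down the sum of the associated Lagrange multipliers, forcing at least one of them to be strictly positive, and complementary slackness will then promote ``positive multiplier'' to ``active constraint''.

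Concretely, I would form the Lagrangian of \eqref{P1.4} with a multiplier $\lambda \geqslant 0$ for the aggregate aligned-fraction constraint and $\{\mu_k \geqslant 0\}_{k=1}^{K}$ for the per-device bounds. Differentiating with respect to $p_{\text{max}}$ yields the single scalar relation
\begin{equation*}
1 - \sum_{k=1}^{K}\mu_k = 0,
\end{equation*}
so $\sum_k \mu_k = 1$ and there must exist $k^\star$ with $\mu_{k^\star}>0$. Complementary slackness then gives $\|\mathbf{p}_{k^\star}\|^2 = p_{\text{max}}$. It remains to identify $p_{\text{max}}$ with $P_0$: by Lemma \ref{lemma:mon_proof}, $p^\star(\alpha)$ is increasing (and continuous) in $\alpha$, so the bisection in Algorithm~\ref{Alg:bisec} terminates at the largest $\alpha^\star$ with $p^\star(\alpha^\star)=P_0$. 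Hence at the optimum $p_{\text{max}} = P_0$ and $\|\mathbf{p}_{k^\star}\|^2 = P_0$, proving the lemma.

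The main obstacle in a careful write-up is ensuring that KKT is actually applicable to \eqref{P1.4}. This reduces to verifying a constraint qualification (Slater's condition will suffice), which I would check by exhibiting a strictly feasible point: assuming $\sigma^2>0$ and a non-degenerate channel set, one may take each $\mathbf{p}_k$ to be a small positive multiple of $\sum_{\ell\neq k}\mathbf{h}_{k\ell}$ so that the aggregate inequality becomes strict, together with a choice of $p_{\text{max}}$ slightly larger than $\max_k\|\mathbf{p}_k\|^2$. A secondary remark worth recording is that only the scalar stationarity in $p_{\text{max}}$ is used for this lemma; the remaining vector stationarity equations in each $\mathbf{p}_k$ are not needed here and presumably underlie the companion Lemma~\ref{lemma:direc} characterising the beamformer direction.
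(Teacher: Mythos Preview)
Your proposal is correct and follows essentially the same route as the paper: form the Lagrangian of \eqref{P1.4}, use scalar stationarity in $p_{\text{max}}$ to get $\sum_k \mu_k = 1$, invoke complementary slackness to conclude some $\|\mathbf{p}_{k^\star}\|^2 = p_{\text{max}}$, and then appeal to Lemma~\ref{lemma:mon_proof} to identify $p_{\text{max}}$ with $P_0$ at the optimal $\alpha^\star$. Your treatment is in fact a bit more careful than the paper's, since you explicitly verify Slater's condition and isolate exactly which KKT ingredient (only the $p_{\text{max}}$-stationarity plus complementary slackness) is needed for this lemma versus its companion Lemma~\ref{lemma:direc}.
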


\begin{lemma}[Centroid Beamforming]  \label{lemma:direc} \emph{Define $[\mathbf{H}]_\ell$ as the $\ell$-th column of matrix $\mathbf{H}$. When $N_t=K-1$,
the beamforming directions of devices with  partial-power transmission are given  as
\begin{equation}\label{eq:dir_par}
    \frac{\mathbf{p}_k^{\star}}{\|\mathbf{p}_k^{\star}\|} = \frac{\sum\limits_{\ell\neq k}^K\left[\left(\mathbf{H}_k^H\right)^{-1}\right]_\ell}{\left\|\sum\limits_{\ell\neq k}^K\left[\left(\mathbf{H}_k^H\right)^{-1}\right]_\ell\right\|},
\end{equation}
and those with  full-power transmission are given as
\begin{equation}\label{eq:dir_full}
    \frac{\mathbf{p}_k^{\star}}{\|\mathbf{p}_k^{\star}\|} = \frac{\sum\limits_{\ell\neq k}^K\left[\left(\mathbf{H}_k\mathbf{H}_k^H+\mu_k\mathbf{I}\right)^{-1}\mathbf{H}_k\right]_\ell}{\left\|\sum\limits_{\ell\neq k}^K\left[\left(\mathbf{H}_k\mathbf{H}_k^H+\mu_k\mathbf{I}\right)^{-1}\mathbf{H}_k\right]_\ell\right\|},
\end{equation}
where $\mu_k$ is the regularization term that diminishes as the SNR increases. 
}
\end{lemma}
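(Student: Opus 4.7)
The plan is to invoke the convexity of Problem~\eqref{P1.4} established in Lemma~\ref{lemma:con_proof}, so that its KKT conditions are both necessary and sufficient, and then to read off the stated centroid directions from stationarity. First I would rewrite the SOC-type constraint of \eqref{P1.4} compactly using the matrix $\mathbf{H}_k$ whose columns are $\{\mathbf{h}_{k\ell}\}_{\ell\neq k}$: the quadratic term becomes $\sum_k \mathbf{p}_k^H \mathbf{H}_k \mathbf{H}_k^H \mathbf{p}_k$ and the linear term becomes $\sum_k 2\,\mathrm{Re}(\mathbf{p}_k^H \mathbf{H}_k \mathbf{1})$, where $\mathbf{1}$ is the all-ones vector of length $K-1$.

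Next I would form the Lagrangian with a single multiplier $\lambda\geqslant 0$ for the SOC constraint and multipliers $\nu_k\geqslant 0$ for the per-device bounds $\|\mathbf{p}_k\|^2\leqslant p_{\max}$, and compute the Wirtinger derivative with respect to $\mathbf{p}_k^{\ast}$. Stationarity collapses to a per-device linear equation
\begin{equation*}
\bigl(\tau\,\mathbf{H}_k \mathbf{H}_k^H + \nu_k\,\mathbf{I}\bigr)\mathbf{p}_k \;=\; \lambda\,\mathbf{H}_k \mathbf{1},
\end{equation*}
where $\tau := \lambda\alpha/\sqrt{K\sigma^{2}+\sum_j \|\mathbf{H}_j^H \mathbf{p}_j\|^{2}}$ is a common positive scalar, independent of $k$. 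Because $\mathbf{H}_k\mathbf{1}=\sum_{\ell\neq k}\mathbf{h}_{k\ell}$ and the column sums of $(\tau\mathbf{H}_k\mathbf{H}_k^H+\nu_k\mathbf{I})^{-1}\mathbf{H}_k$ are exactly the vectors appearing in the statement, the direction of $\mathbf{p}_k^{\star}$ is determined up to a positive scalar, which is absorbed by the normalization in \eqref{eq:dir_par}--\eqref{eq:dir_full}.

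The split into the two cases of Lemma~\ref{lemma:direc} then comes from complementary slackness. For a partial-power device ($\nu_k=0$), the hypothesis $N_t=K-1$ makes $\mathbf{H}_k$ square and invertible, so $(\mathbf{H}_k\mathbf{H}_k^H)^{-1}\mathbf{H}_k = (\mathbf{H}_k^H)^{-1}$ and the direction reduces to the centroid of the columns of the traditional ZF inverse, yielding \eqref{eq:dir_par}. For a full-power device ($\nu_k>0$), I would factor $\tau$ out of the matrix on the left and set $\mu_k:=\nu_k/\tau>0$, so that \eqref{eq:dir_full} emerges with the same centroid form but based on a regularized (MMSE-type) pseudo-inverse. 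The claim that $\mu_k$ shrinks as SNR grows would follow by observing that in the denominator of $\tau$ the noise term $K\sigma^{2}$ becomes negligible relative to the signal power at high SNR, so to keep the SOC constraint tight $\nu_k/\tau$ must vanish accordingly.

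The auxiliary Lemma~\ref{lemma:one_full} would be dispatched from the bisection structure of Algorithm~\ref{Alg:bisec}: by Lemma~\ref{lemma:mon_proof} and continuity of $p^{\star}(\cdot)$, the solution of \eqref{P1.3} satisfies $p^{\star}(\alpha^{\star})=P_0$, hence $\max_k\|\mathbf{p}_k^{\star}\|^{2}=P_0$. The main obstacle I anticipate is the careful Wirtinger differentiation of the SOC constraint and the clean isolation of the common scalar $\tau$ so that the Lagrange-multiplier ratios can be cleanly absorbed into normalization; once $\tau$ is identified, the identification of the partial- and full-power structures and the interpretation of $\mu_k$ as a diminishing regularization are essentially routine linear algebra.
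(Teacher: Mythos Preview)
Your proposal is correct and follows essentially the same route as the paper's proof in Appendix~\ref{App_two_lemma}: form the Lagrangian of \eqref{P1.4}, extract the per-device stationarity equation $\bigl(c_0\lambda\,\mathbf{H}_k\mathbf{H}_k^H + 2v_k\mathbf{I}\bigr)\mathbf{p}_k = \lambda\,\mathbf{H}_k\mathbf{1}$ (your $\tau$ is the paper's $c_0\lambda$), and split on complementary slackness with $\mu_k = 2v_k/(c_0\lambda)$. The only minor deviation is that the paper obtains Lemma~\ref{lemma:one_full} from stationarity in $p_{\max}$ (yielding $\sum_k v_k = 1$, hence some $v_k>0$) rather than your monotonicity/bisection argument, but both routes are valid.
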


 The result in \eqref{eq:dir_par} shows that the direction of the optimal beamforming of a device with partial-power transmission points to the centroid of the column vectors of a ZF or a regularized channel-inversion precoder. Such a design overcomes fading of multiuser channels to  facilitate simultaneous signal alignments at other devices while balancing their AirComp errors.

\section{Zero-forcing Design of  Distributed Multicast Beamforming}\label{sec:ZF}
In this section, we consider the ZF design criterion of forcing the received signals to approach the desired ground-truth values without consideration of channel noise. Under this criterion, a low-complexity design of distributed multicast beamforming is presented as follows. Conditioned on the alignment factor  $\eta$, the ZF criterion allows the joint beamforming design to be decoupled into the following individual beamforming problems formulated for device $k$ as 
\begin{equation}\label{eq:zf_min}
    \mathbf{p}_k^{\star} = \arg\min_{\mathbf{p}_k}\quad \left\|\mathbf{H}_k^H\mathbf{p}_k-\sqrt{\eta}\mathbf{1}_{(K-1)}\right\|^2,~\forall k,
    \tag{P2}
\end{equation}
where the objective function is termed as \emph{misalignment error}.  
Note that in this case, since $N_t\geqslant K-1$, there is sufficient DoF to align the channels, i.e., $\mathbf{H}_k^H\mathbf{p}_k=\sqrt{\eta}\mathbf{1}_{(K-1)},~\forall k$ is always satisfied. Therefore $\left\|\mathbf{H}_k^H\mathbf{p}_k-\sqrt{\eta}\mathbf{1}_{(K-1)}\right\|^2 = 0$ always has non-trivial solutions. Then following lemma gives a solution with the smallest norm.

\begin{proposition}\label{prop:suff_knt}\emph{ When $N_t\geqslant K-1$, given the alignment factor $\eta$, the misalignment error of device $k$ is minimized by following ZF multicast  beamforming: 
\begin{equation}\label{eq:zf_p2}
    \begin{aligned}
        \mathbf{p}_k = \sqrt{\eta} \mathbf{H}_{k}\left(\mathbf{H}_{k}^{H} \mathbf{H}_{k}\right)^{-1} \mathbf{1}_{(K-1)},\qquad ~\forall k, 
    \end{aligned}
\end{equation}
where $\eta$ is chosen to meet the power constraint. 
}
\end{proposition}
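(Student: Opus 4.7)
My plan is to treat Problem~\eqref{eq:zf_min} as a standard linear least-squares problem over the beamformer $\mathbf{p}_k$ and to exploit the degree-of-freedom surplus provided by $N_t\geqslant K-1$. First I would note that, under the generic full-rank assumption on the D2D channel vectors, $\mathbf{H}_k^H\in\mathbb{C}^{(K-1)\times N_t}$ has full row rank. Hence the underdetermined linear system $\mathbf{H}_k^H\mathbf{p}_k=\sqrt{\eta}\,\mathbf{1}_{(K-1)}$ is consistent, so the minimum of the misalignment error in \eqref{eq:zf_min} equals zero and is attained on an affine subspace of $\mathbb{C}^{N_t}$. The remaining task is therefore to single out which vector in this solution set is intended by \eqref{eq:zf_p2}, and to verify that \eqref{eq:zf_p2} indeed lies in it.

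Next I would decompose an arbitrary zero-error beamformer as $\mathbf{p}_k=\mathbf{u}+\mathbf{v}$, where $\mathbf{u}$ lies in the column space of $\mathbf{H}_k$ (equivalently, the row space of $\mathbf{H}_k^H$) and $\mathbf{v}$ lies in the null space of $\mathbf{H}_k^H$. Orthogonality of these subspaces gives $\|\mathbf{p}_k\|^2=\|\mathbf{u}\|^2+\|\mathbf{v}\|^2$, so the unique minimum-norm solution corresponds to $\mathbf{v}=\mathbf{0}$. Writing $\mathbf{u}=\mathbf{H}_k\mathbf{c}$ for some coefficient vector $\mathbf{c}\in\mathbb{C}^{K-1}$ and substituting into the alignment equation yields $(\mathbf{H}_k^H\mathbf{H}_k)\mathbf{c}=\sqrt{\eta}\,\mathbf{1}_{(K-1)}$. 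Since $\mathbf{H}_k^H\mathbf{H}_k$ is Hermitian positive definite under the full-column-rank assumption, inverting the Gram matrix reproduces exactly the closed form \eqref{eq:zf_p2}. A direct plug-in check confirms $\mathbf{H}_k^H\mathbf{p}_k=\sqrt{\eta}\,\mathbf{1}_{(K-1)}$, so the misalignment error is zero as claimed.

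Finally I would justify why the minimum-norm choice is the operationally relevant one: because all candidate beamformers already satisfy the ZF equation identically, the only remaining active constraint is the per-device transmit-power budget $\|\mathbf{p}_k\|^2\leqslant P_0$. For fixed $\eta$, shrinking the norm preserves zero misalignment error while enlarging the feasible range of $\eta$, so the minimum-norm representative is the one that supports the largest alignment factor across all $K$ devices. The value of $\eta$ is then set by the tightest power budget, namely $\eta=P_0/\max_k\mathbf{1}_{(K-1)}^{T}(\mathbf{H}_k^H\mathbf{H}_k)^{-1}\mathbf{1}_{(K-1)}$, which corresponds to the weakest D2D cluster limiting the over-the-air aggregation.

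I do not expect a genuine technical obstacle here, since the argument is essentially the Moore--Penrose pseudoinverse/orthogonal-projection decomposition applied to an underdetermined consistent system. The only subtle point worth flagging in the write-up is that \eqref{eq:zf_min} as stated does not by itself pick out a unique minimizer when $N_t>K-1$; the formula \eqref{eq:zf_p2} is the minimum-norm representative, and this implicit tie-breaking rule should be made explicit so that the subsequent derivation of $\eta$ from the power constraint is unambiguous.
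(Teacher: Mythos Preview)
Your proposal is correct and arrives at the same formula by the same overall logic as the paper: recognize that with $N_t\geqslant K-1$ the ZF constraint $\mathbf{H}_k^H\mathbf{p}_k=\sqrt{\eta}\,\mathbf{1}_{(K-1)}$ is consistent, so the misalignment error can be driven to zero, and then single out the minimum-norm solution to respect the power budget. The only technical difference is that the paper obtains the minimum-norm representative via the KKT conditions of the constrained problem $\min\|\mathbf{p}_k\|^2$ subject to $\mathbf{H}_k^H\mathbf{p}_k=\sqrt{\eta}\,\mathbf{1}_{(K-1)}$ (stationarity forces $\mathbf{p}_k\propto\mathbf{H}_k\boldsymbol{\lambda}$, and substituting into the constraint fixes $\boldsymbol{\lambda}$), whereas you use the orthogonal range/null-space decomposition directly. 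Both are textbook derivations of the Moore--Penrose pseudoinverse solution and yield \eqref{eq:zf_p2} identically; your projection argument is arguably more self-contained since it avoids introducing multipliers, while the paper's KKT route parallels the machinery already used for the MMSE design. Your remark that \eqref{eq:zf_min} alone does not select a unique minimizer when $N_t>K-1$, and that the minimum-norm tie-breaking is what makes the subsequent choice of $\eta$ well-posed, is a useful clarification the paper leaves implicit.
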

\begin{proof}
    See Appendix \ref{App_suff_knt}.
\end{proof}

The beamformer in   \eqref{eq:zf_p2} can be written in a centroid  form similarly as its MMSE counterpart, i.e., $\mathbf{p}_k \!=\! \sqrt{\eta} \sum\limits_{\ell\!\neq\! k}^K \left[ \mathbf{H}_{k}\left(\mathbf{H}_{k}^{H} \mathbf{H}_{k}\right)^{\!-\!1}\right]_\ell$. Specifically,  this multicast beamforming is given by the centroid of the column vectors of a traditional ZF precoder for the associated D2D multiuser channel to the peers. However, it should be noted  that compared with MMSE beamforming, this ZF scheme can potentially  amplify the noise and is therefore vulnerable to deep fading. 

By substituting \eqref{eq:zf_p2} in Proposition \ref{prop:suff_knt}, the problem of alignment-error minimization over the alignment factor $\eta$ is 
\begin{equation}
    \begin{aligned}
    \min _{\left\{\eta\right\}} 
    &\quad \frac{KD V^2\sigma^2}{(K-1)^2\eta},
    \\
    \text{s.t.}&\quad \eta \mathbf{1}_{(K-1)}^H \left(\mathbf{H}_{k}^{H} \mathbf{H}_{k}\right)^{-1}\mathbf{1}_{(K-1)}\leqslant P_0, \quad \forall k.
    \end{aligned}
    \tag{P2.1}\label{P2.1}
\end{equation}
Since the objective of Problem \eqref{P2.1} is a monotonically decreasing function of $\eta$, the optimal $\eta^{\star}_{\text{ZF}}$ is directly given as
\begin{equation}\label{eq:opt_eta}
    \eta^{\star}_{\text{ZF}} = \min_{k} \frac{P_0}{\mathbf{1}_{(K-1)}^H \left(\mathbf{H}_{k}^{H} \mathbf{H}_{k}\right)^{-1}\mathbf{1}_{(K-1)}}.
\end{equation}
By constructing a Rayleigh quotient as following, we have
\begin{equation}
\begin{aligned}
    \eta^{\star}_{\text{ZF}} & = \min_{k} \frac{P_0\mathbf{1}_{(K-1)}^H\mathbf{1}_{(K-1)}}{(K-1)\mathbf{1}_{(K-1)}^H \left(\mathbf{H}_{k}^{H} \mathbf{H}_{k}\right)^{-1}\mathbf{1}_{(K-1)} } 
    \geqslant
    \min_{k} \frac{P_0}{(K-1)\lambda_{\text{max}} \left( \left(\mathbf{H}_{k}^{H} \mathbf{H}_{k}\right)^{-1}\right) } \\
    & \overset{(a)}{=} \min_{k} \frac{P_0}{(K-1)\lambda_{\text{min}}^{-1}\left( \mathbf{H}_{k}^{H} \mathbf{H}_{k}\right) },
\end{aligned}
\end{equation}
where $(a)$ is due to that $\mathbf{H}_{k}^{H} \mathbf{H}_{k}$ is a hermitian matrix and $\lambda_{\text{min}}(.)$, $\lambda_{\text{max}}(.)$ correspond to the minimum and maximum eigenvalue of given matrix separately. Using this upper bound on  $\eta^{\star}_{\text{ZF}}$ leads to an upper bound on the sum AirComp error in the current case: 
\begin{equation} \label{eq:ZF_mse}
        \text{MSE}_{\text{ZF}} \leqslant \frac{K D V^2 \sigma^2}{(K-1)P_0\min_{k}\lambda_{\text{min}}\left( \mathbf{H}_{k}^{H} \mathbf{H}_{k}\right)}. 
\end{equation}
Note that the minimum eigenvalue $\lambda_{\text{min}}\left( \mathbf{H}_{k}^{H} \mathbf{H}_{k}\right)$ is a monotonically increasing function of $N_t$ \cite{wen2019reduced}. This demonstrates the diversity gain of increasing the transmit array size for reducing the sum AirComp error. Next, comparing the sum AirComp errors of the MMSE and ZF designs, it is worth mentioning that $\text{MSE}_\text{mmse}\leqslant \text{MSE}_\text{ZF}$.

\section{Application to Distributed Optimization} \label{sec:DO}
In this section, we consider the application of distributed AirComp designed in the preceding sections to provide an efficient air interface for implementing the distributed-optimization algorithm in Appendix~\ref{App_prelimarary} in a D2D network. The effects of AirComp error on its convergence are charaterized mathematically. For tractable analysis, several assumptions commonly made in the literature (see e.g., \cite{saha2021decentralized}) are also adopted in this work. 

\begin{assumption}[Continuity] \emph{Each local objective function is L-Lipschitz continuous:
    \begin{equation}
        \left|f_{k}(\mathbf{x})-f_{k}(\mathbf{y})\right| \leq L\|\mathbf{x}-\mathbf{y}\| \quad \text{for}~ \mathbf{x}, \mathbf{y} \in \mathcal{X}
    \end{equation}
    where $L$ is a constant. 
}    
\end{assumption}

\begin{assumption}[Local Gradient Estimation] \emph{The stochastic subgradient $\tilde{\bm{g}}_k(n)$ at the $k$-th device is an \emph{unbiased} estimate  of the ground-truth local subgradient, $\bm{g}_{k}(n) \in \partial f_{k}\left(\mathbf{x}_{k}(n)\right)$,  with bounded second moment:
    \begin{equation}
        \mathsf{E}[\tilde{\bm{g}}_{k}(n)]= \bm{g}_{k}(n) \quad \text { and } \quad \mathsf{E}\left[||\tilde{\bm{g}}_{k}(n)||^2\right] \leq \Omega^2,
    \end{equation}
    where $\Omega$ is a given constant.}
\end{assumption}

\subsection{Effects of AirComp Error}
In this subsection, we investigate  the effects of AirComp error on the subgradient estimation and state updating, which are the key operations of distributed optimization. The variables in the subsequent analysis follow those defined in Appendix~\ref{App_prelimarary}. 

First, consider round $n$, the receive signal vector at device $k$  in \eqref{eq:vec_rec} can be rewritten as 
\begin{equation}
    \mathbf{r}_{k}(n) = \frac{1}{K-1}\sum\limits_{\ell = 1,\ell\neq k}^K\mathbf{z}_{\ell}(n)+\mathbf{\Delta}_k(n),
\end{equation}
where the random variable $\mathbf{\Delta}_k(n)$ defined below represents  the distortion caused by wireless propagation: 
\begin{equation}\label{eq:delta_k}
    \mathbf{\Delta}_k(n) = \frac{ V(n)}{K-1}\sum_{\ell=1, \ell\neq k}^K\left(\frac{\mathbf{h}_{k \ell}^H(n)\mathbf{p}_k(n)}{\sqrt{\eta(n)}}-1\right)\mathbf{s}_\ell(n)+ \mathbf{w}_k(n).
\end{equation}
Note that the summation of $\|\mathbf{\Delta}_k(n)\|^2$ over devices gives the sum AirComp error in \eqref{ori_pro}.  Then based on \eqref{eq:dual_updates} and \eqref{eq:step},  the state variable  at device $i$ is updated as 
\begin{equation}
\begin{aligned}\label{eq:air_upd}
    \mathbf{z}_{k}(n+1) =  \sum_{\ell = 1}^K W_{k \ell } \mathbf{z}_{\ell}(n)+\hat{\bm{g}}_{k}(n),
\end{aligned}
\end{equation}
where $\hat{\bm{g}}_{k}(n)$ represents  the channel distorted version  of the stochastic subgradient, $\hat{\bm{g}}_{k}(n)$, namely  $\hat{\bm{g}}_{k}(n) = \tilde{\bm{g}}_{k}(n)+\beta\mathbf{\Delta}_k(n)$.

Next, the effects of AirComp error are reflected by the deviation of the noisy   subgradient, $\hat{\bm{g}}_{k}(n)$,  from its  ground truth. Consider  ZF beamforming designed in Section \ref{sec:ZF}. From its definition, the channel distortion in  \eqref{eq:delta_k}, $\mathbf{\Delta}_k(n)$, reduces to channel noise: $\mathbf{\Delta}_k(n) = \mathbf{w}_k(n)$. It follows from this fact and Assumption 1 that the noisy subgradient $\hat{\bm{g}}_{k}(n)$ yields an \emph{unbiased estimate} of the ground truth: 
\begin{equation}\label{eq:exp_sub_g_zf}
    \mathsf{E}\left[\hat{\bm{g}}_k(n)\right] = \mathsf{E}\left[\tilde{\bm{g}}_{k}(n)+\beta\mathbf{\Delta}_k(n)\right] = \bm{g}_{k}(n). 
\end{equation}
Consider  MMSE beamforming designed in Section \ref{sec:MMSE}. It follows from \eqref{eq:delta_k} and its definition, the corresponding noisy subgradient, however, is a \emph{biased estimate} of the ground truth:  
\begin{equation}\label{eq:exp_sub_g_mmse}
    \begin{aligned}
        \mathsf{E}\left[\hat{\bm{g}}_k(n)\right] &= \mathsf{E}\left[\tilde{\bm{g}}_{k}(n)+\beta\mathbf{\Delta}_k(n)\right] 
        & = \bm{g}_{k}(n) + \frac{\beta V(n)}{K-1}\sum\limits_{\ell=1, \ell\neq k}^K\left(\frac{\mathbf{h}_{k \ell}^H(n)\mathbf{p}_k(n)}{\sqrt{\eta(n)}}-1\right)\mathsf{E}\left[\mathbf{s}_\ell(n)\right], 
    \end{aligned}
\end{equation}
where $\mathsf{E}\left[\mathbf{s}_\ell(n)\right]\neq 0 $ as discussed in Section \ref{sec:overview}. Even though the MMSE beamforming  achieves lower  AirComp error as defined in \eqref{ori_pro} than the ZF counterpart, the former's biased subgradient estimation leads to worse convergence performance as elaborated in the next subsection. 

Last, a result useful for convergence analysis is obtained by bounding the deviation of the state variable (i.e., dual variable), $\mathbf{z}_{k}(n)$, from its average $\overline{\mathbf{z}}(n) = \frac{1}{K}\sum_k \mathbf{z}_{k}(n)$, termed \emph{dual-variable deviation}. Note that it also serves as a measure of the deviation of individual noisy subgradients. Mathematically, the deviation is defined as $\mathsf{E}\left\|\overline{\mathbf{z}}(n)-\mathbf{z}_{k}(n)\right\|_*$ with $\|\cdot\|_* = \mathrm{sup}_{\|u=1\|}\langle.,u\rangle$ denoting the dual norm. To bound the dual-variable deviation,  based on Assumption~2, the second moment of the noisy sub-gradient, $\hat{\bm{g}}_{k}(n)$, can be bounded as:
\begin{equation}\label{eq:var_sub_g}
    \mathsf{E}\left[\|\hat{\bm{g}}_k(n)\|^2\right] = \mathsf{E}\left[\|\tilde{\bm{g}}_{k}(n)\|^2+\beta^2\|\mathbf{\Delta}_k\|^2\right] \leqslant \Omega^2+\frac{\beta^2\text{MSE}(n)}{K}, 
\end{equation}
where the AirComp error $\text{MSE}(n)$ defined in \eqref{ori_pro} applies to both the cases of ZF and MMSE beamforming. Using the bound, the desired result is obtained as follows. 
\begin{lemma}\label{lemma:est_err} \emph{For distributed optimization using distributed AirComp (with either MMSE or ZF beamforming), the expected dual-variable deviation can be  bounded as:
    \begin{equation}\label{eq:cons_err}
        \mathsf{E}\left\|\overline{\mathbf{z}}(n)-\mathbf{z}_{k}(n)\right\|_*  \leqslant \frac{2 \xi}{\beta(1-\lambda_2)} \log (N \sqrt{K})+3 \xi, 
    \end{equation}
    where $\xi = \sqrt{\Omega^{2}+\frac{\beta^2\max_{n}\text{MSE}(n)}{K}}$ and  $\lambda_2$ is the second largest eigenvalue of the edge-weight matrix $\mathbf{P}$.}
\end{lemma}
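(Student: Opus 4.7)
The plan is to adapt the classical spectral-gap argument for distributed dual averaging (as used by Duchi--Agarwal--Wainwright) to the present setting, where the stochastic subgradient is perturbed by AirComp-induced noise through $\hat{\bm{g}}_{k}(n)=\tilde{\bm{g}}_{k}(n)+\beta\mathbf{\Delta}_k(n)$. The starting point is to unroll the linear recursion in \eqref{eq:air_upd}: viewing $W_{k\ell}$ as the $(k,\ell)$ entry of the doubly-stochastic edge-weight matrix $\mathbf{P}$, repeated substitution gives
\begin{equation*}
\mathbf{z}_k(n)=\sum_{t=1}^{n-1}\sum_{\ell=1}^{K}\bigl[\mathbf{P}^{\,n-1-t}\bigr]_{k\ell}\,\hat{\bm{g}}_{\ell}(t),
\end{equation*}
where I absorb the zero initialization. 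Because $\mathbf{P}$ is doubly stochastic, the network-average dual variable collapses to $\overline{\mathbf{z}}(n)=\tfrac{1}{K}\sum_{t,\ell}\hat{\bm{g}}_{\ell}(t)$, so the deviation of interest becomes $\overline{\mathbf{z}}(n)-\mathbf{z}_k(n)=\sum_{t=1}^{n-1}\sum_{\ell=1}^{K}\bigl(\tfrac{1}{K}-[\mathbf{P}^{\,n-1-t}]_{k\ell}\bigr)\hat{\bm{g}}_{\ell}(t)$.

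Next I would take the dual norm, apply the triangle inequality, and invoke Jensen's inequality so that $\mathsf{E}\|\hat{\bm{g}}_{\ell}(t)\|_{*}\le\bigl(\mathsf{E}\|\hat{\bm{g}}_{\ell}(t)\|_{*}^{2}\bigr)^{1/2}$. The second-moment bound \eqref{eq:var_sub_g} then caps each summand by $\xi$, which is precisely the quantity appearing in the statement. The spectral gap of $\mathbf{P}$ supplies the geometric decay: using the standard estimate $\bigl|[\mathbf{P}^{s}]_{k\ell}-\tfrac{1}{K}\bigr|\le\sqrt{K}\,\lambda_2^{s}$ (which follows from the spectral decomposition of the symmetric doubly-stochastic matrix $\mathbf{P}$ together with the fact that $\tfrac{1}{\sqrt K}\mathbf{1}$ is its top eigenvector), I obtain
\begin{equation*}
\mathsf{E}\bigl\|\overline{\mathbf{z}}(n)-\mathbf{z}_{k}(n)\bigr\|_{*}\le\xi\sum_{t=1}^{n-1}\sum_{\ell=1}^{K}\Bigl|\bigl[\mathbf{P}^{\,n-1-t}\bigr]_{k\ell}-\tfrac{1}{K}\Bigr|.
\end{equation*}

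The main obstacle is squeezing the logarithmic factor out of this geometric sum without blowing up constants. I would split the time horizon at a threshold $T^{\star}=\lceil\log(N\sqrt{K})/(1-\lambda_{2})\rceil$: for the recent window $t\in[n-T^{\star},n-1]$ (roughly the last $T^{\star}$ rounds), I bound each coefficient crudely by $2$, summing to at most $2T^{\star}\xi$, which after multiplication by $\beta^{-1}$ absorption inside $\xi$ (via the weight $\beta$ in the update) yields the $\tfrac{2\xi}{\beta(1-\lambda_2)}\log(N\sqrt K)$ term; for the older window $t<n-T^{\star}$, I invoke the geometric bound $K\cdot\sqrt{K}\lambda_{2}^{n-1-t}\le\sqrt K\,\lambda_2^{T^\star}\cdot K/(1-\lambda_2)$ which, by the choice of $T^{\star}$, is $O(\xi)$ and rolls into the additive $3\xi$ residual. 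Keeping track of how the $1/\beta$ scaling enters through the weighted update \eqref{eq:dual_updates}, and verifying that the AirComp noise contribution enters exactly through $\xi$ and not with an extra multiplicative factor, is the delicate bookkeeping step; everything else is the verbatim spectral-gap bound from distributed dual averaging with $\Omega$ replaced by $\xi$ to absorb the AirComp MSE of \eqref{eq:var_sub_g}.
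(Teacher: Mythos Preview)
Your overall strategy---unroll the linear recursion, bound each $\mathsf{E}\|\hat{\bm{g}}_\ell\|_*$ by $\xi$ via Jensen and \eqref{eq:var_sub_g}, then split the time horizon into a ``recent'' window (crudely bounded) and an ``old'' window (controlled by geometric mixing)---is exactly the approach the paper takes. The paper also separates the last step's contribution as an additive $2\xi$ before the split, which together with the old-window residual gives the $3\xi$.

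There is, however, one concrete error in your setup that leads you to hand-wave the $1/\beta$ factor. The recursion \eqref{eq:air_upd} uses the matrix $\mathbf{W}=(1-\beta)\mathbf{I}+\beta\mathbf{P}$, \emph{not} $\mathbf{P}$ (see Appendix~\ref{App_prelimarary}). So the unrolled expression should read $\mathbf{z}_k(n)=\sum_{t,\ell}[\mathbf{W}^{\,n-1-t}]_{k\ell}\hat{\bm{g}}_\ell(t)$, and the relevant mixing rate is governed by the second-largest eigenvalue of $\mathbf{W}$, which equals $(1-\beta)+\beta\lambda_2$. The spectral gap of $\mathbf{W}$ is therefore $\beta(1-\lambda_2)$, and the cutoff index becomes $\hat n=n-\log(N\sqrt{K})/\bigl(\beta\log\lambda_2^{-1}\bigr)$ (as in the paper's proof), so the recent window has length $O\!\bigl(\log(N\sqrt{K})/(\beta(1-\lambda_2))\bigr)$. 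That is the sole source of the $1/\beta$ in front of the logarithm; it does \emph{not} come from ``$\beta^{-1}$ absorption inside $\xi$'' or from the weight in \eqref{eq:dual_updates}. Once you unroll with $\mathbf{W}$ instead of $\mathbf{P}$, the $\beta$ falls out mechanically and no separate bookkeeping is needed.
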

\begin{proof}
    See Appendix \ref{App_est_err}.
\end{proof}
One can observe from the above result that dual-variable deviation grows with the numbers of rounds and users following $O(\log(N\sqrt{K}))$. Nevertheless,  it is shown in the next section that with a properly chosen step size, the  noisy primal variables, $\{\hat{\mathbf{x}}_k(N)\}$, can  asymptotically approach the optimal  point as the number of rounds, $N$, increases.  

\subsection{Convergence Analysis}
The convergence of the distributed-optimization algorithm (see Appendix~\ref{App_prelimarary}) as implemented using distributed AirComp is analyzed as follows. To begin with, given  $N$ rounds,  the convergence is evaluated using the metric of expected suboptimality gap,  $\mathsf{E}\left[f\left(\hat{\mathbf{x}}_{k}(N)\right)-f\left(\mathbf{x}^{\star}\right)\right]$,  where $\hat{\mathbf{x}}_{k}(N)$ and $\mathbf{x}^{\star}$ represent the noisy state at device $k$ and the optimal point, respectively. 

\begin{theorem}[Convergence with Distributed AirComp]\label{theo:1} \emph{Given the optimal point $\mathbf{x}^{\star}$, assume that the proximal function satisfying  \eqref{eq:psi} in Appendix~\ref{App_prelimarary} is bounded by some constant $R$ as $\psi\left(\mathbf{x}^{\star}\right) \leq R^{2}$. Let the step size $\alpha(n)$ be chosen as $\alpha(n) = \frac{R \sqrt{1-\lambda_2}}{4 \xi \sqrt{n}}$.  Then the expected suboptimality gap can be bounded for the  ZF multicast  beamforming as 
\begin{equation}\label{eq:conv}
\begin{aligned}
\mathsf{E}\left[ f\left(\hat{\mathbf{x}}_{k}(N)\right)-f\left(\mathbf{x}^{\star}\right)\right]\leqslant \frac{20 R \log (N \sqrt{K})}{\beta\sqrt{N} \sqrt{1-\lambda_2}}\left(\Omega^{2}+\frac{\beta^2\max_{n}\text{MSE}(n)}{K}\right)^{\frac{1}{2}},
\end{aligned}
\end{equation}
and for the MMSE design as 
\begin{equation}\label{eq:conv_MMSE}
\begin{aligned}
    &\mathsf{E}\left[ f\left(\hat{\mathbf{x}}_{k}(N)\right)\!-\!f\left(\mathbf{x}^{\star}\right)\right]
    &\leqslant \frac{20 R \log (N \sqrt{K})}{\beta\sqrt{N} \sqrt{1\!-\!\lambda_2}}\left(\Omega^{2}\!+\!\frac{\beta^2\max_{n}\text{MSE}(n)}{K}\right)^{\frac{1}{2}} \!+\! \frac{\|\mathbf{x}^{\star}\|}{N}\sum_{n=1}^N \sqrt{\frac{\text{MSE}(n)}{K}},
\end{aligned}
\end{equation}
where $\lambda_2$ is the second largest eigenvalue of the edge-weight matrix $\mathbf{P}$.}
\end{theorem}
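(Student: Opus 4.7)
The plan is to build on the classical convergence analysis of distributed dual averaging (in the spirit of Duchi et al.'s analysis cited in the paper) and treat the AirComp error as an additive perturbation to the stochastic subgradient. First I would isolate, for any fixed device $k$, the suboptimality decomposition that is standard in dual averaging: $f(\hat{\mathbf{x}}_k(N)) - f(\mathbf{x}^\star)$ is upper bounded by a sum of three contributions, namely (i) an initialization/proximal term of order $\psi(\mathbf{x}^\star)/(N\alpha(N))$, (ii) a gradient-energy term of order $\frac{1}{N}\sum_n \alpha(n)\,\mathsf{E}\|\hat{\bm g}(n)\|^2$ built from the \emph{noisy} subgradients $\hat{\bm g}_k(n)$, and (iii) a consensus term proportional to $\frac{L}{N}\sum_n \mathsf{E}\|\bar{\mathbf z}(n)-\mathbf z_k(n)\|_*$. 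In the MMSE case, a fourth term will appear because $\mathsf{E}[\hat{\bm g}_k(n)] \neq \bm g_k(n)$ by \eqref{eq:exp_sub_g_mmse}; for ZF this extra term vanishes thanks to \eqref{eq:exp_sub_g_zf}.

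Next I would plug in the tools the paper already supplies. The gradient-energy term is controlled using \eqref{eq:var_sub_g}, which yields $\mathsf{E}\|\hat{\bm g}_k(n)\|^2 \leq \xi^2$ with $\xi^2 = \Omega^2 + \beta^2 \max_n \text{MSE}(n)/K$. The consensus term is controlled by Lemma \ref{lemma:est_err}, giving the $\frac{2\xi}{\beta(1-\lambda_2)}\log(N\sqrt{K})+3\xi$ bound. Substituting $\alpha(n) = R\sqrt{1-\lambda_2}/(4\xi\sqrt{n})$ balances the initialization and gradient-energy pieces so that $\frac{R^2}{N\alpha(N)} + \frac{1}{N}\sum_n \alpha(n)\xi^2$ collapses to $O(R\xi/\sqrt{N(1-\lambda_2)})$, while the consensus term contributes the dominating $O(R\xi\log(N\sqrt{K})/(\beta\sqrt{N(1-\lambda_2)}))$ factor; absorbing constants and using $\sqrt{1-\lambda_2}\le 1$ produces the factor $20$ in \eqref{eq:conv}. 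This settles the ZF bound completely.

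For the MMSE bound, I would handle the bias term separately. In the dual-averaging telescoping argument, the per-round contribution of the bias is a linear functional $\langle \mathsf{E}[\hat{\bm g}_k(n)]-\bm g_k(n),\,\mathbf{x}^\star-\bar{\mathbf{y}}(n)\rangle$, and the part coupling to $\mathbf x^\star$ survives in expectation. Bounding it by Cauchy--Schwarz gives $\|\mathbf{x}^\star\|\cdot\|\mathbf b_k(n)\|$, where $\mathbf b_k(n)$ is the bias vector in \eqref{eq:exp_sub_g_mmse}. Applying Cauchy--Schwarz once more to the sum over $\ell$ in \eqref{eq:exp_sub_g_mmse}, using boundedness of $\mathsf{E}[\mathbf s_\ell(n)]$ (which follows from the normalization in Section~\ref{sec:overview}), and recognizing the emerging sum $\sum_k\sum_{\ell\neq k}|\mathbf h_{k\ell}^H\mathbf p_k/\sqrt\eta - 1|^2$ as (proportional to) the noise-free part of $\text{MSE}(n)$ in \eqref{ori_pro}, I arrive at $\frac{1}{K}\sum_k\|\mathbf b_k(n)\|\lesssim \sqrt{\text{MSE}(n)/K}$. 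Averaging over $n=1,\dots,N$ and using convexity of $f$ to pass from an averaged primal iterate to $\hat{\mathbf x}_k(N)$ yields precisely the additional penalty $\frac{\|\mathbf x^\star\|}{N}\sum_{n=1}^N \sqrt{\text{MSE}(n)/K}$ in \eqref{eq:conv_MMSE}.

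The main obstacle I anticipate is the careful bias bookkeeping in the MMSE case: the bias is deterministic conditional on the channel, is device-indexed, and couples with the trajectory $\bar{\mathbf y}(n)$ rather than $\mathbf x^\star$ directly, so the telescoping that isolates the clean $\|\mathbf x^\star\|$ factor requires the same Bregman-based manipulation as in classical dual averaging. A secondary obstacle is making sure that when one replaces the true subgradient by a biased estimate, the Lipschitz-based passage from $\bar{\mathbf y}(n)$ to the individual $\mathbf x_k(n)$ in the consensus term does not re-introduce an $\mathsf{E}[\hat{\bm g}_k]$ that silently drops the bias; keeping track of this is what justifies that the only place the MMSE bias appears is the additive $O(\|\mathbf x^\star\|\cdot\sqrt{\text{MSE}/K})$ correction, so that the first summand on the right-hand side of \eqref{eq:conv_MMSE} matches exactly the ZF bound \eqref{eq:conv}.
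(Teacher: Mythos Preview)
Your proposal follows essentially the same route as the paper: both invoke the standard dual-averaging decomposition (the paper imports it as \cite[Lemma~8]{saha2021decentralized}) into a proximal term, a gradient-energy term controlled by \eqref{eq:var_sub_g}, consensus terms controlled by Lemma~\ref{lemma:est_err}, and an inner-product term $\mathsf{E}\!\left[\frac{1}{NK}\sum_{n,k}\langle \bm g_k(n)-\hat{\bm g}_k(n),\,\mathbf{x}_k(n)-\mathbf{x}^\star\rangle\right]$ that vanishes under ZF and survives under MMSE; and both balance via the stated step size.

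The one noteworthy difference is how the MMSE bias term is handled. You plan to isolate the deterministic bias vector $\mathbf{b}_k(n)$ from \eqref{eq:exp_sub_g_mmse}, pair it with $\mathbf{x}^\star-\bar{\mathbf y}(n)$, and then argue via Bregman telescoping that only the $\|\mathbf{x}^\star\|$ piece survives --- which you correctly flag as the main obstacle. The paper sidesteps this entirely: it keeps the full random perturbation $\beta\mathbf{\Delta}_k(n)$ (not just its mean), applies Cauchy--Schwarz to get $\mathsf{E}\|\mathbf{\Delta}_k(n)\|\cdot\mathsf{E}\|\mathbf{x}_k(n)-\mathbf{x}^\star\|$, and then uses the one-line descent fact $\mathsf{E}\|\mathbf{x}_k(n)-\mathbf{x}^\star\|\le\mathsf{E}\|\mathbf{x}_k(0)-\mathbf{x}^\star\|=\|\mathbf{x}^\star\|$ (since $\mathbf{x}_k(0)=\mathbf 0$). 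This avoids your anticipated bookkeeping and also skips the detour of relating $\|\mathbf{b}_k(n)\|$ to the misalignment part of \eqref{ori_pro}: bounding $\frac{1}{K}\sum_k\mathsf{E}\|\mathbf{\Delta}_k(n)\|$ by $\sqrt{\mathrm{MSE}(n)/K}$ is immediate from Jensen and the definition of $\mathrm{MSE}(n)$.
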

\begin{proof}
    See Appendix \ref{App_theo_1_2}.
\end{proof}
In the above results, those terms comprising the AirComp error, $\text{MSE}(n)$, reflect the effects of distributed AirComp. For a sanity check, substituting  $\text{MSE}(n)=0$ into the results in Theorem \ref{theo:1} yields the existing result for the ideal case of noiseless channels \cite{duchi_dual_2012}. Next, comparing with \eqref{eq:conv}, the last term at the right-hand size of \eqref{eq:conv_MMSE} reveals slower convergence due to the MMSE beamforming as opposed to the ZF counterpart due to the former's bias in subgradient estimation as shown in the preceding subsection. More critically, the said term does not vanish as the number of rounds, $N$, grows, but instead converges to a constant, $\|\mathbf{x}^{\star}\|\cdot \mathsf{E}[ \sqrt{\text{MSE}(n)/K}]$. As a result, in terms of expected suboptimality gap,  ZF beamforming can significantly  outperform the MMSE counterpart in the regime of low-to-meidum SNRs. Last, the AirComp-error term, introduced by $\frac{\beta^2\max_{n}\text{MSE}(n)}{K}$, scales with the number of rounds, $N$, and devices, $K$, following  $O(\log (N\sqrt{K})/\sqrt{N})$. However, it decays with the increasing transmit SNR following  $O(1/\sqrt{\text{SNR}})$. This leads to a narrowing performance gap between the ZF and MMSE beamforming as the SNR grows.

\section{Simulation Results}
A distributed system with  a varying number of devices, $K$, is simulated. In the system, all channel gains are modelled as i.i.d. Rician fading with the power ratio between the direct and scatter paths being $0.6$ and unit total power. Transmit SNR is defined as $\text{SNR} = P_0/\sigma^2$ and is set equal for all devices. The bandwidth is $B = 1$ MHz. Other case-dependent simulation settings are specified in the sequel. For  distributed  optimization, we consider the specific scenario of  distributed FEEL  that trains a classifier model for  handwritten-digit recognition using the well-known MNIST dataset. This dataset contains $60,000$ labeled training data samples in total. To distribute these samples in a \emph{non i.i.d.} manner, they are first sorted by their digit label, then divided into $20$ shards of size $3,000$, with $2$ shards allocated to each of the $10$ devices. The classifier model is implemented using a 6-layer convolutional neural network (CNN) that consists of two $5\times 5$ convolution layers with ReLU activation, each followed by a $2 \times 2$ max pooling layer, a fully-connected layer with 512 units and ReLU activation, and a final softmax output layer. The total number of parameters is $21,840$. The test accuracy is defined as the lowest test accuracy among all devices. In this scenario, the LSI of each device is a locally computed stochastic gradient for model updating. 

Two benchmarking schemes are described below
\begin{itemize} 
    \item \textbf{Digital communication}: Each coefficient of a transmitted state variable is quantized into $Q = 16$ bits, which are transmitted reliably at a capacity-achieving rate. In each round, devices take turn to broadcast their LSI to peers based on \emph{time-division multiple access} (TDMA). ZF precoding is applied. 
    
    \item \textbf{Single-aggregation AirComp}: In each round, the $K$ AirComp processes in the system are orthogonalized or equivalently  completed sequentially using TDMA, giving the name of the scheme. Consequently, the per-round latency is $K$-time higher than that of the proposed distributed AirComp although the AirComp error is smaller as demonstrated in the sequel. The traditional  beamforming design for a multiple-input-single-output channel is applied together with effective channel inversion for receive-signal alignment, which is a special case of the AirComp design in \cite{zhu_mimo_2019}. 
\end{itemize}

\begin{figure}[t!]
    \centering
    \includegraphics[width=.55\linewidth]{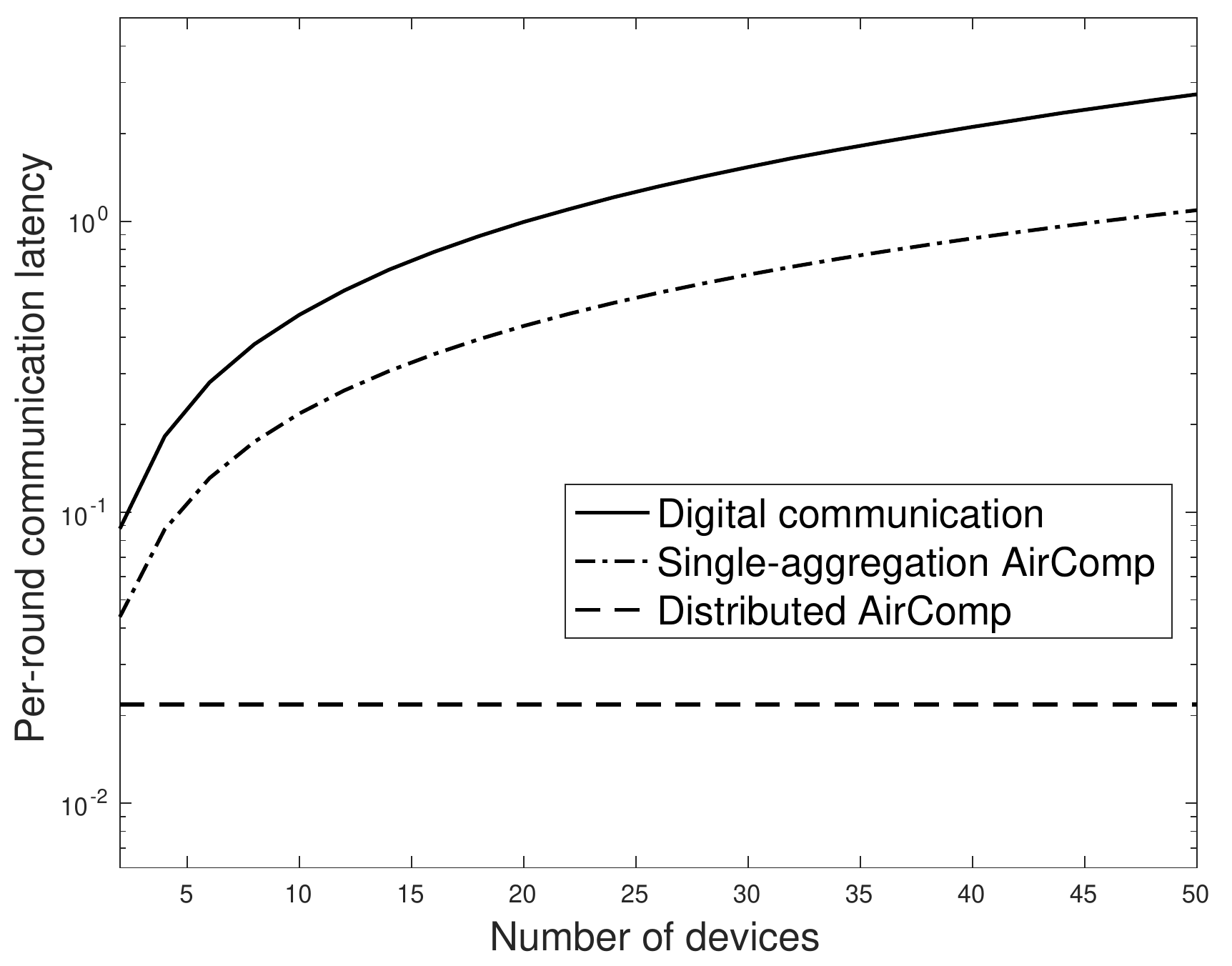}
    \caption{Per-round communication latency with transmit SNR = 20 dB.}
    \label{Fig: dis_t}
\end{figure}

\subsection{Performance of Distributed AirComp}
In Fig. \ref{Fig: dis_t},  the per-round  communication latency for LSI  exchange between $K$ devices is compared between the proposed distributed AirComp and benchmarking schemes for a varying number of devices,  $K$. The transmit SNR  is   $20$ dB. To support the number of devices as many as $50$, each device is provisioned with a large-scale array with $N_t = 100$. One can observe that the feature of simultaneous multicasting enables the proposed distributed AirComp to keep the latency constant instead of increasing with the number of devices as for the benchmarking schemes. Distributed AirComp is observed to achieve much lower latency than the latter with the gap increasing rapidly as $K$ grows. When there are many devices (e.g., $50$), the latency reduction of distributed AirComp is more than \emph{two-order of magnitude} with respect to (w.r.t.)  digital communication and about $50$-time w.r.t. single-aggregation AriComp. 

\begin{figure}[t!]
    \centering
    \subfigure[]
    {
        \includegraphics[width=.47\linewidth]{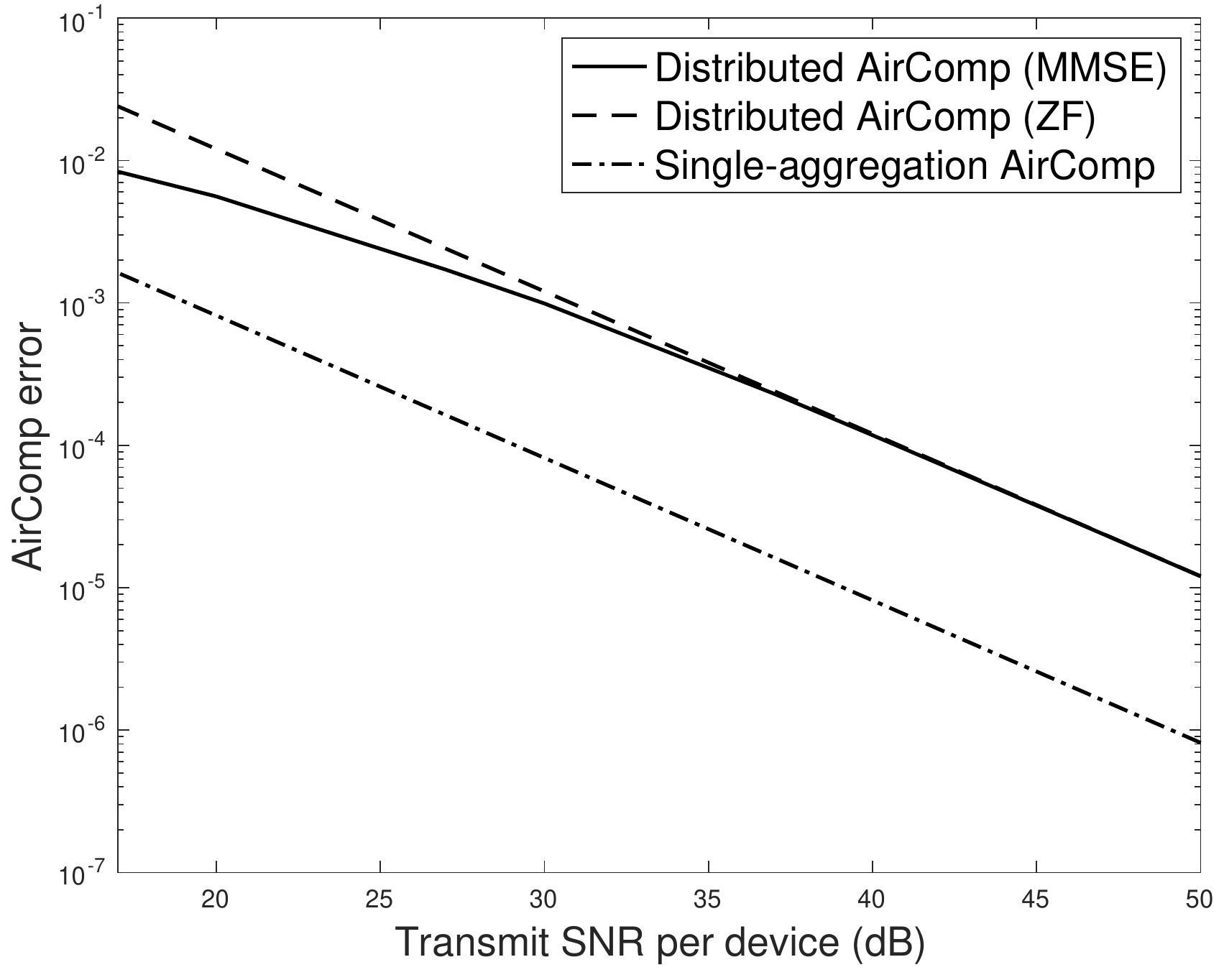}
        \label{Fig:MSE_SNR}
    }\hspace{-3mm}
    \subfigure[]
    {
        \includegraphics[width=.47\linewidth]{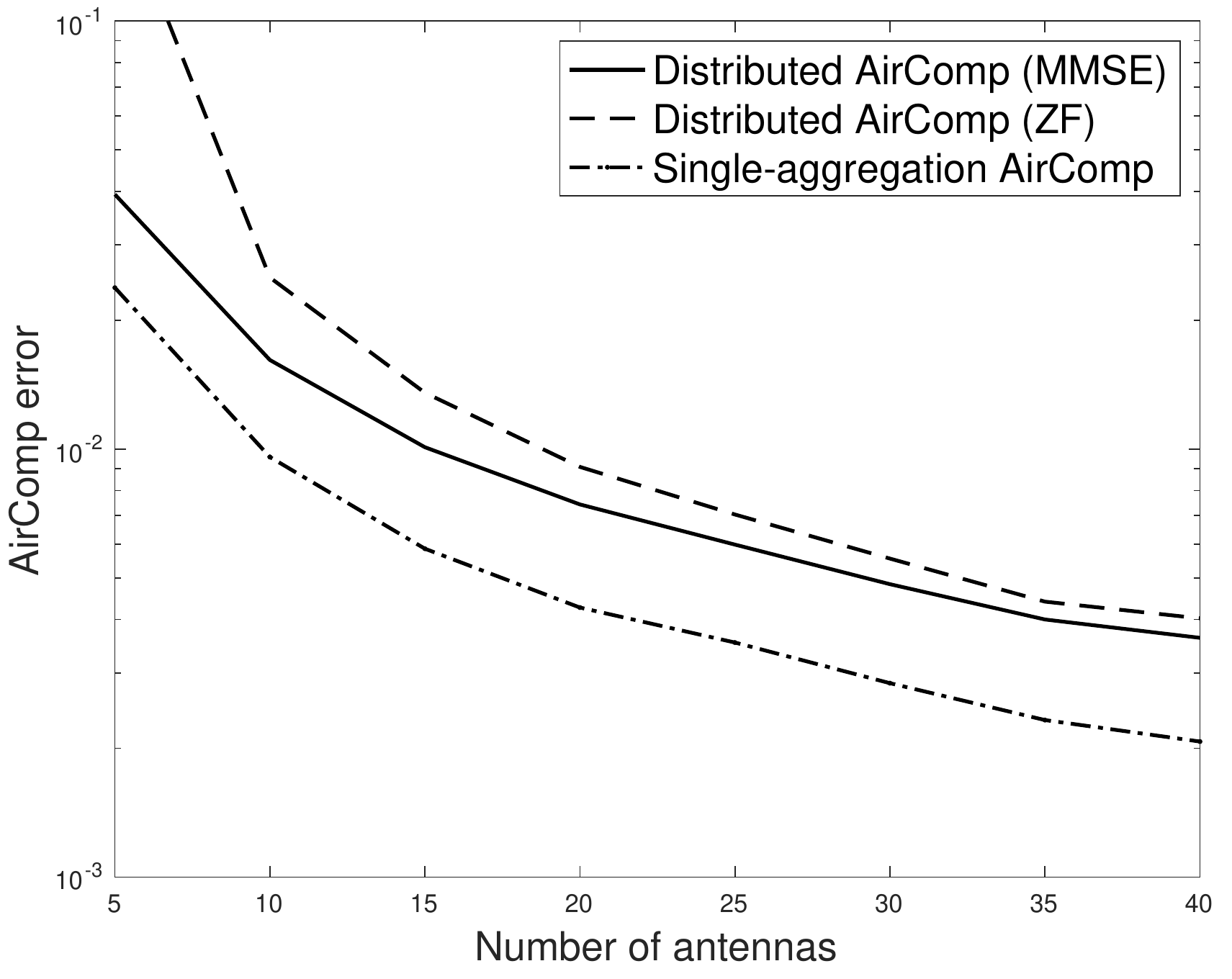}
        \label{Fig:MSE_Nt}
    }\\\vspace{-2mm}
    \subfigure[]
    {
        \includegraphics[width=.47 \linewidth]{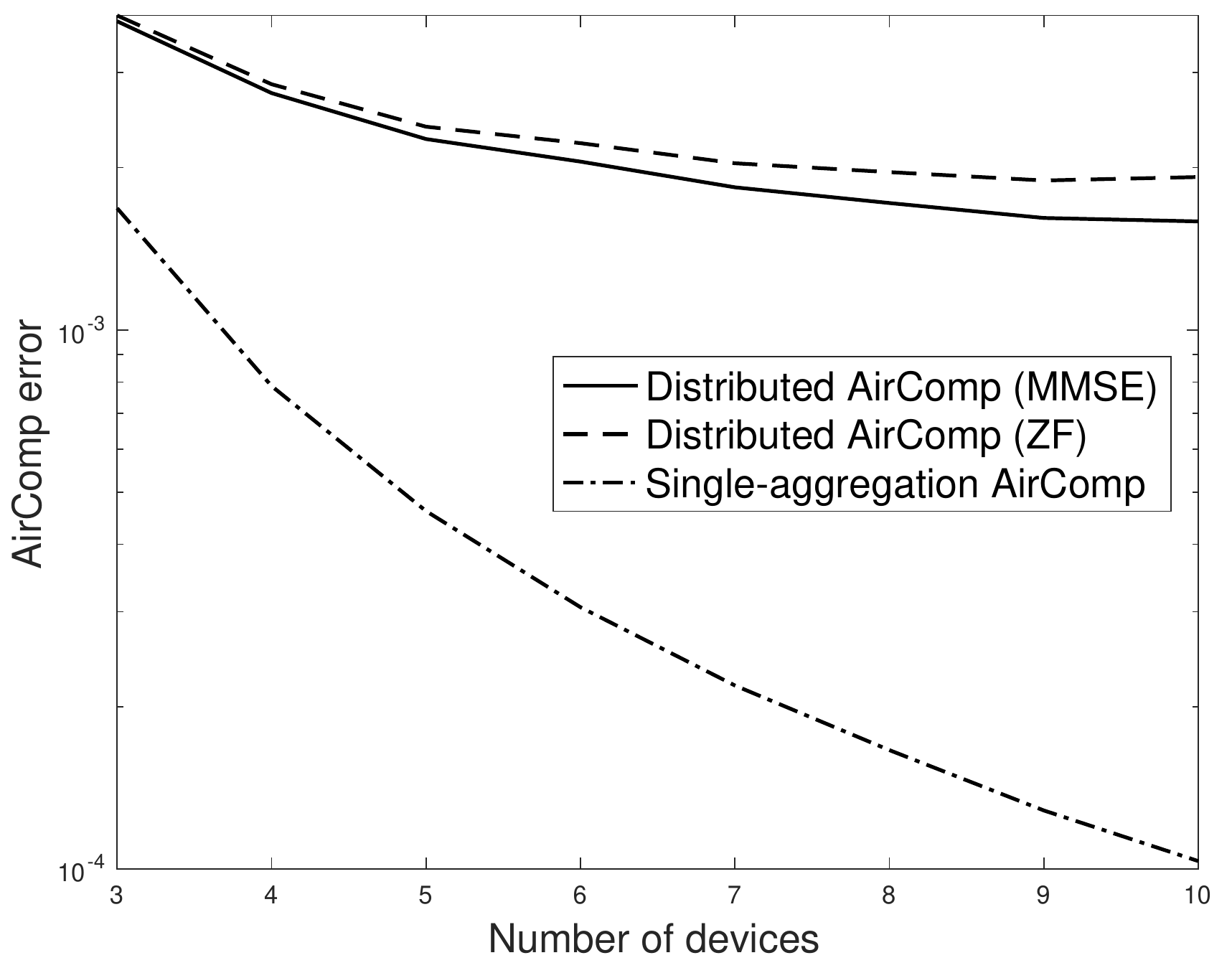}
        \label{Fig:MSE_K}
    }
    \caption{Comparison of AirComp error between distributed AirComp with MMSE and ZF beamforming and single-aggregation AirComp for (a) varying transmit SNR, (b) a varying number of transmit antennas $N_t$, and (c) a varying number of devices $K$. The default value of $K=5$ is used. }
     \label{Fig:net_para}
\end{figure}

Next, the curves of AirComp error versus transmit SNR, number of transmit antennas $N_t$, and number of devices $K$ are plotted in Fig.~ \ref{Fig:MSE_SNR}, \ref{Fig:MSE_Nt}, and  \ref{Fig:MSE_K}, respectively. By default, the number of devices   $K=5$. All simulation results demonstrate decreasing AirComp error as any  of the three parameters increases. Relatively small  array sizes, $N_t = 4$ and $N_t = 18 $, are considered in Fig.~\ref{Fig:MSE_SNR} and  and  \ref{Fig:MSE_K}, respectively, to investigate the limitations of distributed AirComp. One can observe from Fig.~ \ref{Fig:MSE_SNR} that  distributed AirComp, which strives to support $K$ AirComp processes simultaneously despite a small array size, incurs about $10$-time larger AirComp error than single-aggregation AirComp. Furthermore, as shown in Fig.~\ref{Fig:MSE_K}, the former sees that the AirComp error saturates as $K$ increases, indicating the cancellation of the opposite effects of aggregation gain in error suppression and severer receive-signal misalignment. The above disadvantage of distributed AirComp as a price for dramatic latency reduction is alleviated when the array size increases as shown in Fig.~\ref{Fig:MSE_Nt}. For instance, for $N_t = 40$, the error ratio between distributed AirComp and single-aggregation counterpart reduces to below $4$ times. In addition, it can be observed that for distributed AirComp, MMSE multicast beamforming outperforms the ZF design in the regimes of  low-to-medium  SNRs or array size or as the number of devices grows. 

\begin{figure}[t!]
    \centering
    \subfigure[SNR = $10$ dB]{\includegraphics[width=.95\linewidth]{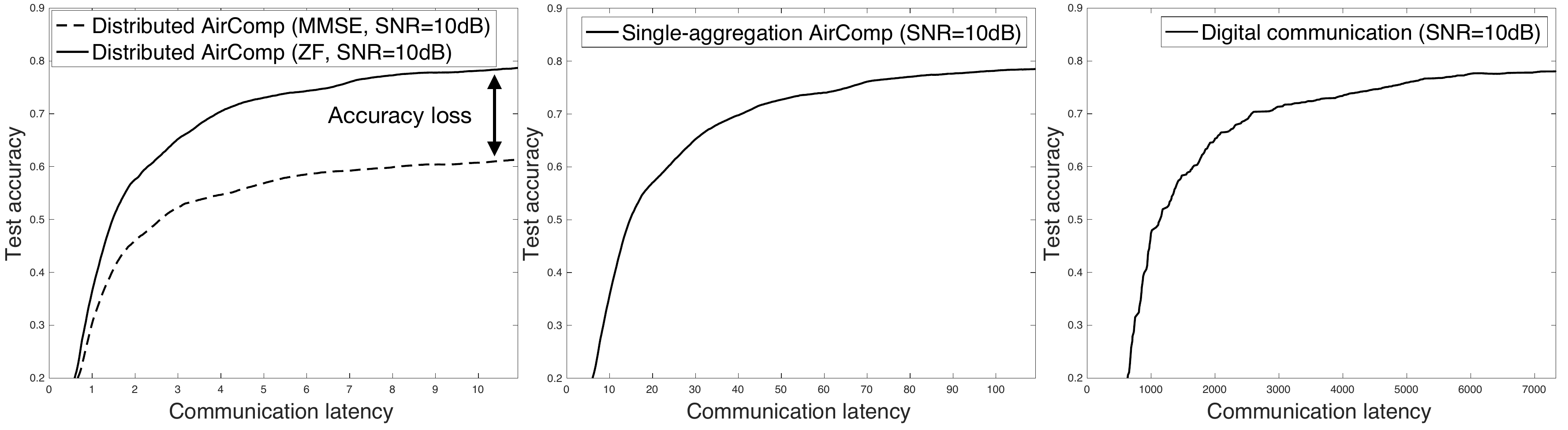}}
    \subfigure[SNR = $20$ dB]{\includegraphics[width=.95\linewidth]{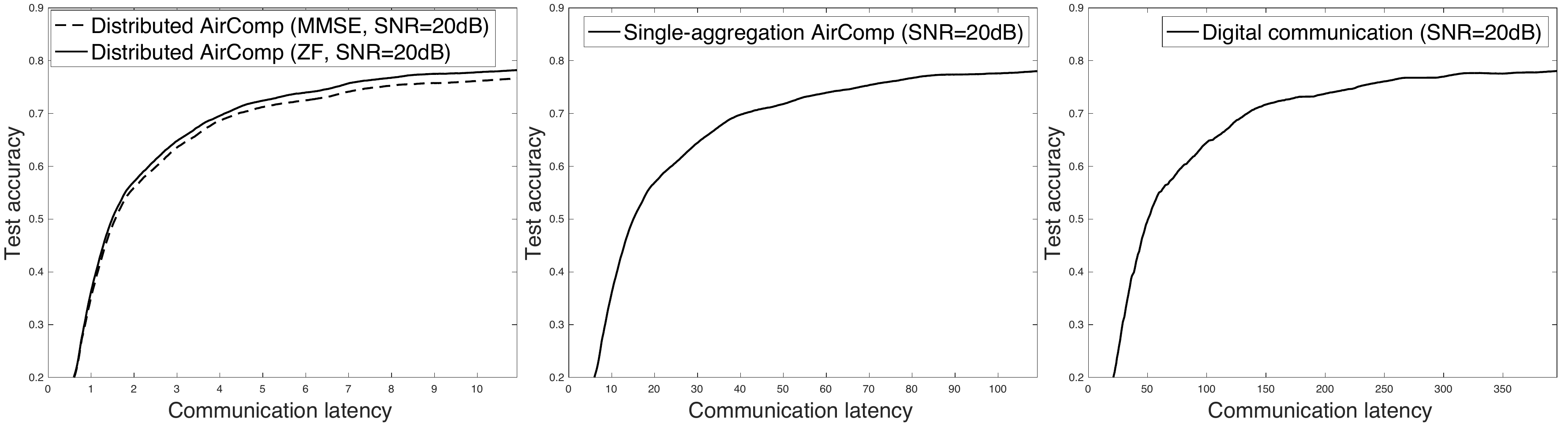}}
    \caption{Test-accuracy comparison between (from left to right)  distributed AirComp, single-aggregation AirComp, and digital communication  for varying communication latency for (a) a medium SNR (SNR = $10$ dB) or (b) a high SNR (SNR = $20$ dB). }
    \label{Fig: dis_opt1}
\end{figure}

\subsection{Performance of  Distributed Optimization}
Consider FEEL, a typical scenario  of distributed optimization, where the number of devices $K = 10$ and each is equipped with $N_t = 18$ antennas. The distributed FEEL algorithm is implemented using either distributed AirComp or one of the benchmarking schemes. Define the communication latency given a number of rounds as the accumulated latency from the start of the task to the current round. The test accuracy of the considered schemes are compared in Fig.~\ref{Fig: dis_opt1} as a function  of communication latency. First, by comparing sub-figures in the same row, the dramatic convergence-speed acceleration achieved by distributed AirComp is aligned with the latency comparison in  Fig.~\ref{Fig: dis_t}. Second, in terms of converged test accuracy,  distributed AirComp with ZF multicast beamforming performs similarly as the benchmarking schemes. Third, in the context of distributed AirComp, MMSE beamforming design is observed from the left-most subfigure of  Fig.~\ref{Fig: dis_opt1}(a) to suffer significant  accuracy loss (i.e., $15\%$)  w.r.t. the ZF counterpart  due to the bias in subgradient estimation as discussed in Section \ref{sec:DO}. The loss is significant in the regime of  low-to-medium SNRs  but varnishes at high SNRs as observed from the left-most subfigure of Fig.~\ref{Fig: dis_opt1}(b).

\section{Concluding Remarks}
To overcome the communication bottleneck in the deployment  of distributed optimization in a distributed network, we have proposed the framework of distributed AirComp that realizes a one-step distributed aggregation of the local states of all devices. The framework  features the seamless integration of simultaneous multicast beamforming of all devices and their full-duplex communication, which makes it possible to support multiple concurrent over-the-air aggregation processes at devices. This has  led to dramatic communication-latency reduction when there are many devices and thus provides a promising solution to data intensive applications such as distributed machine learning and high-mobility applications, such as drone swarm or vehicle platooning. 

This work points to a number of directions warranting follow-up investigations. It is interesting to extend the current single-stream transmission to distributed AirComp with spatial multiplexing, which can further shorten communication latency. On the other hand, the performance of distributed AirComp can be enhanced by distributed resource allocation such as adaptive power control and broadband transmission. Last, particularization of the current design to the area of distributed machine learning provides abundance of cross-disciplinary research opportunities, for example, distributed reinforcement learning with distributed AirComp.

\appendix
\subsection{Distributed Optimization Algorithm}\label{App_prelimarary}
Consider distributed optimization in a distributed network described by an undirected graph denoted as $\mathbb{G} = (\mathbb{V}, \mathbb{E})$, where $\mathbb{V}$ represents  the set of $K$ nodes (e.g., edge devices) and $\mathbb{E}$ represents the set of edges. Each edge, say the one connecting the $k$th and $\ell$th node, is assigned a non-negative weight denoted as $P_{k \ell}$. Then the graph structure can be specified by the $K\times K$ weight matrix $\mathbf{P}$ with $(k, \ell)$th element being $P_{k \ell}$. In particular, two nodes $k\neq \ell$ are connected (i.e., $(k,\ell) \in \mathbb{E}$) if and only if $P_{k \ell}>0$ or otherwise $P_{k \ell} = 0$. The matrix $\mathbf{P}$ is constrained to be doubly stochastic, namely that $\sum_{\ell = 1}^K P_{k \ell }= \sum_{k = 1}^K P_{k \ell}=1$. 

Given the distributed network, the problem of distributed optimization in \eqref{eq:global_obj} can be solved using the classic iterative algorithm of \emph{distributed dual averaging} described as follows \cite{duchi_dual_2012, saha2021decentralized}. To begin with, let each device, say device $k$,  maintain  a two-variable state, $(\mathbf{z}_{k}(n), \mathbf{x}_{k}(n))$, each being a $D\times 1$ row vector with real elements and $\mathbf{x}_{k}(n)\in\mathcal{X}$, which is the support of the objective function. Moreover, a key component of the algorithm is a proximal function $\psi:~\mathbb{R}^D\rightarrow \mathbb{R}$ that is assumed to be 1-strongly convex with respect to some norm $\|\cdot\|$: 
\begin{equation}\label{eq:psi}
    \psi(\mathbf{y}) \geq \psi(\mathbf{x})+\langle\nabla \psi(\mathbf{x}), \mathbf{y}-\mathbf{x}\rangle+\frac{1}{2}\|\mathbf{x}-\mathbf{y}\|^{2} \quad\forall~\mathbf{x}, \mathbf{y} \in \mathcal{X}.
\end{equation}
One example of such a function is $\psi(\mathbf{x}) = \frac{1}{2} \|\mathbf{x}\|^2$. For device $k$, the state variable  $\mathbf{x}_k(n)\in\mathcal{X}$ represents the local estimate of the optimal solution for  \eqref{eq:global_obj} at iteration $n$. Recall that $\tilde{\bm{g}}_{k}(n)$ represents the stochastic subgradient of the local loss function as computed at device $k$ using local data. Then given non-increasing step size $\{\alpha(n)\}$, all devices perform simultaneous updating of their states \cite{saha2021decentralized}: for all $k$ and the $n$ iteration, 
\begin{eqnarray}
\mathbf{z}_{k}(n+1) &=& (1-\beta)\mathbf{z}_{k}(n) + \beta \sum_{\ell \in N(k)} P_{k \ell } \mathbf{z}_{\ell}(n)+\tilde{\bm{g}}_{k}(n), \label{eq:step:1} \\
\mathbf{x}_{k}(n+1) &=&\Pi_{\mathcal{X}}^{\psi}\left(\mathbf{z}_{k}(n+1), \alpha(n)\right),
\end{eqnarray}
where the constant $\beta\in (0, 1)$, $N(k)$ represents the neighbourhood of node $k$ on the graph $\mathbb{G}$, and $\Pi_{\mathcal{X}}^{\psi}$  the projection function defined as
\begin{equation}
    \Pi_{\mathcal{X}}^{\psi}(\mathbf{z}_k(n+1), \alpha(n)):=\underset{\mathbf{x} \in \mathcal{X}}{\operatorname{argmin}}\left\{\langle \mathbf{z}_k(n+1), \mathbf{x}\rangle+\frac{1}{\alpha(n)} \psi(\mathbf{x})\right\} \label{Eq:Project}
\end{equation} 
with $\psi(.)$ being  the proximal function that satisfies \eqref{eq:psi}. For ease of notation, define the weight matrix $\mathbf{W} = (1-\beta)\mathbf{I} + \beta \mathbf{P}$ with the $(k, \ell)$th element denoted as $W_{k \ell}$. Then \eqref{eq:step:1} is rewritten as 
\begin{eqnarray}
\mathbf{z}_{k}(n+1) = \sum_{\ell=1}^K W_{k \ell } \mathbf{z}_{\ell}(n)+\tilde{\bm{g}}_{k}(n), \label{eq:step}. 
\end{eqnarray}
Note that the updating in \eqref{eq:step} is for node $k$ to compute a consensus based on  the states of the peers. On the other hand, the projection function in \eqref{Eq:Project} yields  $\mathbf{x}_k(n+1)$ that minimizes an averaged first-order approximation of  the objective function.
It is assumed that $\psi(\mathbf{x})\geqslant 0~\forall~\mathbf{x}\in\mathcal{X}$ and $\psi(\mathbf{0}) = 0$ without loss of generality.  

The above updating procedure is iterated until a consensus on the minimum of the objective is reached, as the changes of the states of all devices fall below a given threshold. Specifically, given a threshold $\varepsilon$, the convergence criterion can be specified using  the expected suboptimality gap \cite{duchi_dual_2012}:
\begin{eqnarray}\label{eq:metric_c}
    \max_{k}~\mathsf{E}\left[f\left(\hat{\mathbf{x}}_{k}(N)\right)\!-\!f\left(\mathbf{x}^{\star}\right)\right] \leqslant \varepsilon,
\end{eqnarray}
where $\varepsilon$ is a given constant and  the expectation in \eqref{eq:metric_c} taken over all sources of randomness.

\subsection{Proof of Lemma \ref{lemma:ccfp}}\label{App_ccfp}
First, given $\sum_{\ell = 1}^K\sum_{k\neq \ell}^K( \mathbf{h}_{k \ell}^H\mathbf{p}_k+\mathbf{p}_k^H\mathbf{h}_{k \ell})\geqslant 0$, it is direct to see that the objective function of \ref{P1.2} is positive. Denote $x_{k \ell} = \text{Re}(\mathbf{h}_{k \ell}^H\mathbf{p}_k)$, $y_{k \ell} = \text{Im}(\mathbf{h}_{k \ell}^H\mathbf{p}_k)$ as the real and imaginary parts of $\mathbf{h}_{k \ell}^H\mathbf{p}_k$, respectively. Then the numerator and denominator of objective function of Problem \eqref{P1.2} can be written as $A(\mathbf{x}) = 2\sum_{\ell=1}^{K}\sum_{k\neq \ell}^{K} x_{k \ell}$ and $B(\mathbf{x},\mathbf{y}) = 2\sqrt{ K\sigma^{2}+\sum_{\ell=1}^{K}\sum_{k\neq \ell}^{K}(x_{k \ell}^{2}+y_{k \ell}^2)}$ respectively. Notice that the eigenvalues of the Hessian matrix of function $B(\mathbf{x},\mathbf{y})$ : $e_1 \!=\! \frac{2K\sigma^2}{(K\sigma^2+\sum_{k\neq \ell}^{K}(x_{k \ell}^{2}+y_{k \ell}^2))^{3/2}}, e_2\!=\!e_3 \!=\! \cdots \!=\! e_{2K(K-1)} \!=\! \frac{2}{(K\sigma^2+\sum_{k\neq \ell}^{K}(x_{k \ell}^{2}+y_{k \ell}^2))^{1/2}}$  are all positive numbers, hence the $B(\mathbf{x},\mathbf{y})$ is a convex function. Then, since $A(\mathbf{x})$ is a linear function, one can obtain that the the Problem \eqref{P1.2} is a concave-convex fractional program, with its objective function proved to be strictly quasi-concave and pseudo-concave \cite{schaible1976fractional}. Therefore, the objective function of Problem \eqref{P1.2} is a positive unimodal function and Problem \eqref{P1.2} has an unique optimal solution, and the proof is completed. 

\subsection{Proof of Lemma\ref{lemma:mon_proof}}\label{App_mon_proof}
It is first to prove that for any $\alpha_1\leqslant\alpha_2$, $p_k^{\star}(\alpha_1)\leqslant p_k^{\star}(\alpha_2)$ by contradiction. Assume there exists $\alpha_1\leqslant\alpha_2$ such that $p_k^{\star}(\alpha_1)> p_k^{\star}(\alpha_2)$. Since Problem \ref{P1.2} has been proved to be an strictly quasi-concave function, then for $\alpha_1\leqslant\alpha_2$, the size of feasible region $\mathcal{P}_1$ for $\alpha = \alpha_1$ is larger than or equal to the size of $\mathcal{P}_2$ for $\alpha = \alpha_2$, which means we can at least find a set of $\{\mathbf{p}_k\}$ in $\mathcal{P}_1$ that makes $p_k^{\star}(\alpha_1)= p_k^{\star}(\alpha_2)$. This contradicts the previous assumption and thus for any $\alpha_1\leqslant\alpha_2$, $p_k^{\star}(\alpha_1)\leqslant p_k^{\star}(\alpha_2)$.

Moreover, for $\alpha = 0$, we have $\mathbf{p}_k = \mathbf{0}, ~\forall k$, and thus $p_k^{\star}(\alpha)=0$. Then, for any $\alpha > 0$, there must $\exists~\mathbf{p}_k$ such that $\|\mathbf{p}_k\|^2>0$, which means $p_k^{\star}(\alpha)>0$. To this end, $p_k^{\star}(\alpha)$ cannot be a constant function for all $\alpha$. Then the proof is finished.

\subsection{Proof of Lemma \ref{lemma:con_proof}}\label{App_con_proof}
First, the objective of Problem \eqref{P1.4} is linear and the second constraint is a convex set. Then given any $\alpha$, left-hand-side of the first constraint can be regarded as a $2$-norm function and hence convex. Then the right-hand-side of the first constraint is a linear function. Therefore, the proof is completed.

\subsection{Proof of Lemma \ref{lemma:one_full} and Lemma \ref{lemma:direc}}\label{App_two_lemma}
We first give the corresponding Lagrange function of Problem \eqref{P1.4} as below
\begin{equation}
    \begin{aligned}
    &\mathcal{L}(\left\{\mathbf{p}_{k}\right\}, p_{\text{max}}, \lambda, \{v_k\}) = p_{\text{max}} + \\ &\lambda\left(2\alpha\sqrt{(K\sigma^{2}+\sum_{\ell=1}^{K}\sum_{k\neq \ell}^{K}\mathbf{h}_{k \ell }^{H} \mathbf{p}_{k}\mathbf{p}_{k}^{H}\mathbf{h}_{k \ell })}- \left(\sum_{\ell = 1}^K\sum_{k\neq \ell}^K( \mathbf{h}_{k \ell}^H\mathbf{p}_k+\mathbf{p}_k^H\mathbf{h}_{k \ell})\right) \right)+ \sum_{k = 1}^K v_k(\mathbf{p}_k^H\mathbf{p}_k-p_{\text{max}})
    \end{aligned}
\end{equation}
where $\lambda$ and $\{v_k\}$ are the Lagrangian multipliers. The KKT conditions are given by
\begin{equation}\label{eq:kkt2.3}
    \left\{
    \begin{aligned}
        &
        \frac{\partial\mathcal{L}(\left\{\mathbf{p}_{k}\right\},p_{\text{max}}, \lambda, \{v_k\})}{\partial p_{\text{max}}} = 1-\sum_{k=1}^K v_k = 0,\\
        &\frac{\partial\mathcal{L}(\left\{\mathbf{p}_{k}\right\},p_{\text{max}}, \lambda, \{v_k\})}{\partial \mathbf{p}_k} = \lambda\left(\frac{\alpha\left(\sum\limits_{\ell\neq k}^K \mathbf{h}_{k \ell}\mathbf{h}_{k \ell}^H\right)\mathbf{p}_k}{\sqrt{ K\sigma^{2}+\sum_{\ell=1}^{K}\sum_{k\neq \ell}^{K}\left\|\mathbf{h}_{k \ell }^{H} \mathbf{p}_{k} \right\|^{2}}}-\sum\limits_{\ell\neq k}^K \mathbf{h}_{k \ell}\right)+2v_k\mathbf{p}_k=\mathbf{0}, \forall k,\\
        & \lambda\left(\alpha\sqrt{ K\sigma^{2}+\sum_{\ell=1}^{K}\sum_{k\neq \ell}^{K}\left\|\mathbf{h}_{k \ell }^{H} \mathbf{p}_{k} \right\|^{2}}- \left(\sum_{\ell = 1}^K\sum_{k\neq \ell}^K( \mathbf{h}_{k \ell}^H\mathbf{p}_k+\mathbf{p}_k^H\mathbf{h}_{k \ell})\right) \right) = 0, \\
        & v_k(\mathbf{p}_k^H\mathbf{p}_k-p_{\text{max}}) = 0,~\forall k.
    \end{aligned}
    \right.
\end{equation}

In \eqref{eq:kkt2.3}, the first condition indicates $\exists k$, $v_k\neq 0$. This reveals that for any $\alpha$, at least one device transmits with power $p^{\star}(\alpha)$. Then by Lemma \ref{lemma:mon_proof}, one can conclude that for the maximum aligned fraction $\alpha$, at least one device transmits with the maximum power $P_0$.

Next, together with the second and the forth conditions, one can have $\lambda \neq 0$. By the second condition, let a constant $c_0 = \frac{\alpha}{\sqrt{ K\sigma^{2}+\sum_{\ell=1}^{K}\sum_{k\neq \ell}^{K}\left\|\mathbf{h}_{k \ell }^{H} \mathbf{p}_{k} \right\|^{2}}}$, we have
\begin{equation}\label{eq:bj1}
    \left[c_0\lambda\left(\mathbf{H}_k\mathbf{H}_k^H\right)+2v_k\mathbf{I}\right]\mathbf{p}_k = \mathbf{H}_k\bm{{1}}_{(K-1)}.
\end{equation}
By denoting $\mu_k = \frac{2v_k}{c_0\lambda}$, then the desired results are obtained.

\subsection{Proof of Proposition \ref{prop:suff_knt}}\label{App_suff_knt}
The problem of finding the smallest norm solution for $\left\|\mathbf{H}_k^H\mathbf{p}_k-\sqrt{\eta}\mathbf{1}_{(K-1)}\right\|^2 = 0$ is equivalent to the following optimization problem:
\begin{equation}
    \begin{aligned}
    \min_{\mathbf{p}_{k}} 
    & \quad \|\mathbf{p}_k\|^2 \\
    \text { s.t. }
    & \quad \mathbf{H}_k^H\mathbf{p}_k=\sqrt{\eta}\mathbf{1}_{(K-1)}.
    \end{aligned}
    \nonumber
\end{equation}
This problem is convex since both the objective and feasible region are convex. By reusing Lagrange multiplier $\mathbf{\lambda}$, the Lagrange function can be written as
\begin{equation}
    \mathcal{L}(\mathbf{p}_k, \mathbf{\lambda})  = \|\mathbf{p}_k\|^2+\mathbf{\lambda} \left(\mathbf{H}_k^H\mathbf{p}_k-\sqrt{\eta}\mathbf{1}_{(K-1)}\right).
\end{equation}
The KKT conditions are given by
\begin{equation}
    \left\{
    \begin{aligned}
        & \frac{\partial\mathcal{L}(\mathbf{p}_{k},\lambda)}{\partial \mathbf{p}_{k}} = 2\mathbf{p}_k+\mathbf{H}_k\mathbf{\lambda}^H = \mathbf{0},\\
        & \left(\mathbf{H}_k^H\mathbf{p}_k-\sqrt{\eta}\mathbf{1}_{(K-1)}\right) = \mathbf{0}.
    \end{aligned}
    \right.
\end{equation}
Take the first condition into the second one, one can get the expression of $\mathbf{\lambda}$. Then take $\mathbf{\lambda}$ back to the first condition, the desired result is obtained.

\subsection{Proof of Lemma \ref{lemma:est_err}}\label{App_est_err}
From \eqref{eq:air_upd}, the average state in round $n$ is given as
\begin{equation}\label{eq:bar_zt}
\begin{aligned}
    & \overline{\mathbf{z}}(n+1) =\frac{1}{K} \sum_{k = 1}^K \mathbf{z}_{k}(n+1) = \frac{1}{K} \sum_{k = 1}^K \left(\sum_{\ell = 1}^K W_{k \ell }(n) \mathbf{z}_{\ell}(n)+ \hat{\bm{g}}_{k}(n)\right),\\
    & = \overline{\mathbf{z}}(n)+\frac{1}{K} \sum_{k = 1}^K \hat{\bm{g}}_{k}(n).
\end{aligned}
\end{equation}
Define the matrix $\mathbf{\Phi}(n, s)=\mathbf{W}^{n-s+1}$. Then, for the state update $\mathbf{z}_k(n+1)$ at device $k$, it is expanded as follows
\begin{equation}\label{eq:zt_plus_1}
    \mathbf{z}_{k}(n+1)=\sum_{\ell=1}^{K}[\mathbf{\Phi}(n, s)]_{k \ell } \mathbf{z}_{\ell}(s)+\sum_{r=s+1}^{n}\left(\sum_{\ell=1}^{K}[\mathbf{\Phi}(n, r)]_{k \ell } \hat{\bm{g}}_{\ell}(r-1)\right)+\hat{\bm{g}}_{k}(n),
\end{equation}
where $[\mathbf{\Phi}(n, s)]_{k \ell }$ is the $\ell$-th entry of the $k$-th column of $\mathbf{\Phi}(n, s)$. Since the initial state $\mathbf{z}_k(0) = 0$, using \eqref{eq:bar_zt} and \eqref{eq:zt_plus_1} yields
\begin{equation}
    \bar{\mathbf{z}}(n)\!-\!\mathbf{z}_{k}(n)\!=\!\sum_{s\!=\!1}^{n-1} \sum_{\ell\!=\!1}^{K}\left(\frac{1}{K}\!-\![\mathbf{\Phi}(n\!-\!1, s)]_{k \ell }\right) \hat{\bm{g}}_{\ell}(s\!-\!1)+\left(\frac{1}{K} \sum_{\ell\!=\! 1}^{K}\left(\hat{\bm{g}}_{\ell}(n\!-\!1)\!-\!\hat{\bm{g}}_{k}(n\!-\! 1)\right)\right).
\end{equation}
Based on \eqref{eq:var_sub_g}, $\xi^2 = \Omega^2+\frac{\beta^2\max_{n}\text{MSE}(n)}{K}$ denotes an upper bound on the second moment of $\hat{\bm{g}}_{k}(n)$. Then by Jensen's inequality, one has $\left(\mathsf{E}\left[\left\|\hat{\bm{g}}_{k}(n)\right\|\right]\right)^{2} \leqslant \mathsf{E}\left[\left\|\hat{\bm{g}}_{k}(n)\right\|^{2}\right] \leqslant \xi^{2}$ for all $k$. Hence
\begin{equation}\label{exp_cons_1}
    \mathsf{E} \left\|\overline{\mathbf{z}}(n)-\mathbf{z}_{k}(n)\right\|_{*} \leqslant \sum_{s = 1}^{n-1} \xi\left\|\left[\mathbf{\Phi}(n-1, s)\right]_k -\frac{\mathbf{1}}{K}\right\|_{1}+2 \xi.
\end{equation}
Following steps similar to \cite{saha2021decentralized}, we separate the sum in \eqref{exp_cons_1} into two terms by a cutoff point $\hat{n}=n-\frac{\log N \sqrt{K}}{\beta\log \lambda^{-1}}$, where $\lambda_2 = \max \left\{\lambda_{2}(\mathbf{P}),-\lambda_{K}(\mathbf{P})\right\}$ is the second-largest magnitude of eigenvalues of $\mathbf{P}$,
\begin{equation}\label{exp_cons_2}
    \mathsf{E} \left\|\overline{\mathbf{z}}(n)-\mathbf{z}_{k}(n)\right\|_{*} \leqslant \sum_{s = 1}^{\hat{n}} \xi\left\|\left[\mathbf{\Phi}(n-1, s)\right]_k -\frac{\mathbf{1}}{K}\right\|_{1} + \sum_{s = \hat{n}+1}^{n-1} \xi\left\|\left[\mathbf{\Phi}(n-1, s)\right]_k -\frac{\mathbf{1}}{K}\right\|_{1} 
    + 2 \xi.
\end{equation} 
Then for $s\leqslant \hat{n}$,we have $\left\|[\mathbf{\Phi}(n-1, s)]_{k}-\mathbf{1} /K \right\|_{1}\leqslant 1/K$ and for larger $s$, we relax this term via a more loose bound $\left\|[\mathbf{\Phi}(n-1, s)]_{k}-\mathbf{1} /K \right\|_{1}\leqslant 2$. Taking these two bounds into \eqref{exp_cons_2} together with the fact $\log \lambda_2^{-1}\geqslant 1-\lambda_2$, one can obtain the desired result.

\subsection{Proof of Theorem \ref{theo:1}}\label{App_theo_1_2}
The proof is close to the convergence proof for distributed dual averaging \cite{saha2021decentralized} with some modifications. First, an useful lemma is introduced as follows. 
\begin{lemma}\emph{\cite[lemma 8]{saha2021decentralized} Let $\{\mathbf{x}_k(n)\}$ and $\{\mathbf{z}_k(n)\}$ be the primal and dual variables, then the expected suboptimality gap for both ZF and MMSE beamforming cases is bounded as 
\begin{equation} \label{eq:sub_gap}
\begin{aligned}
    &\mathsf{E} \left[ f_{k}\left(\hat{\mathbf{x}}_{k}(N)\right)-f\left(\mathbf{x}^{\star}\right) \right] \leqslant 
    \frac{1}{N \alpha(N)} \psi\left(\mathbf{x}^{\star}\right)+\frac{\xi^{2}}{2 N} \sum_{n=1}^N \alpha(n-1) \\
    &+\frac{L+\xi}{N K} \sum_{n=1}^N \sum_{k = 1}^K \alpha(n) \mathsf{E}\left[\left\|\bar{\mathbf{z}}(n)-\mathbf{z}_{k}(n)\right\|\right]
    +\frac{L}{N} \sum_{n=1}^N \alpha(n) \mathsf{E}\left[\left\|\bar{\mathbf{z}}(n)-\mathbf{z}_{k}(n)\right\|\right] \\
    &+ \mathsf{E}\left[
        \frac{1}{NK} \sum_{n = 1}^N \sum_{k=1}^K\left\langle\bm{g}_{k}(n)-\hat{\bm{g}}_{k}(n), \mathbf{x}_{k}(n)-\mathbf{x}^{\star}\right\rangle,
    \right]
\end{aligned}
\end{equation}
where $\xi= \sqrt{\Omega^2+\frac{\beta^2\max_{n}\text{MSE}(n)}{K}}$, and $\|\cdot\|$ represents the $\ell_2$-norm that is its own dual.
} 
\end{lemma}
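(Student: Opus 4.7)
The plan is to adapt the standard distributed dual averaging convergence analysis (Duchi--Agarwal--Wainwright, and its noisy extension in \cite{saha2021decentralized}) to the perturbed-subgradient setting induced by AirComp. Because $\hat{\mathbf{x}}_k(N)$ is the running average $\frac{1}{N}\sum_{n=1}^N \mathbf{x}_k(n)$ and each $f_\ell$ is convex, Jensen's inequality immediately reduces the task to controlling $\frac{1}{N}\sum_{n=1}^N \mathsf{E}\bigl[f(\mathbf{x}_k(n)) - f(\mathbf{x}^{\star})\bigr]$. I would then split each per-round gap via the canonical add-and-subtract device around two anchors: the local iterate $\mathbf{x}_\ell(n)$ of each peer and a centralized fictitious iterate $\bar{\mathbf{y}}(n) := \Pi_{\mathcal{X}}^{\psi}(\bar{\mathbf{z}}(n), \alpha(n))$. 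Writing $f = \frac{1}{K}\sum_\ell f_\ell$ and inserting these anchors yields three groups, two of which are handled by Lipschitz continuity (Assumption 1) and produce contributions proportional to $\|\mathbf{x}_k(n) - \mathbf{x}_\ell(n)\|$ and $\|\bar{\mathbf{y}}(n) - \mathbf{x}_\ell(n)\|$.

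The remaining group is the clean centralized suboptimality $\frac{1}{K}\sum_\ell [f_\ell(\bar{\mathbf{y}}(n)) - f_\ell(\mathbf{x}^{\star})]$, which I would bound by $\frac{1}{K}\sum_\ell \langle \bm{g}_\ell(n), \bar{\mathbf{y}}(n) - \mathbf{x}^{\star}\rangle$ using convexity of $f_\ell$, and then decompose each subgradient as $\bm{g}_\ell(n) = \hat{\bm{g}}_\ell(n) + (\bm{g}_\ell(n) - \hat{\bm{g}}_\ell(n))$. For the $\hat{\bm{g}}_\ell(n)$ piece, the canonical dual averaging regret inequality (a direct consequence of the 1-strong convexity of $\psi$ combined with the first-order optimality of the projection in \eqref{Eq:Project}) gives
\[
\sum_{n=1}^N \alpha(n)\bigl\langle \hat{\bm{g}}_\ell(n), \bar{\mathbf{y}}(n) - \mathbf{x}^{\star}\bigr\rangle \;\leq\; \frac{\psi(\mathbf{x}^{\star})}{\alpha(N)} + \frac{1}{2}\sum_{n=1}^N \alpha(n-1)\,\|\hat{\bm{g}}_\ell(n)\|_{*}^{2},
\]
and plugging in the second-moment bound $\mathsf{E}\|\hat{\bm{g}}_\ell(n)\|^2 \leq \xi^2$ from \eqref{eq:var_sub_g} produces exactly $\frac{1}{N\alpha(N)}\psi(\mathbf{x}^{\star}) + \frac{\xi^2}{2N}\sum_n \alpha(n-1)$, the first two terms of the stated bound. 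The residual $\langle \bm{g}_\ell(n) - \hat{\bm{g}}_\ell(n), \bar{\mathbf{y}}(n) - \mathbf{x}^{\star}\rangle$ is reindexed to $\mathbf{x}_\ell(n)$ (absorbing one further Lipschitz-controlled deviation) and left unexpanded as the final expectation term in the lemma, since its sign and size depend on whether the AirComp beamforming is unbiased (ZF) or biased (MMSE).

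To convert all $\|\bar{\mathbf{y}}(n) - \mathbf{x}_\ell(n)\|$ deviations into the advertised $\|\bar{\mathbf{z}}(n) - \mathbf{z}_\ell(n)\|$ form, I would apply non-expansiveness of the proximal projection defined in \eqref{Eq:Project}: by the 1-strong convexity of $\psi$ and the first-order optimality condition, one obtains $\|\Pi_{\mathcal{X}}^{\psi}(\mathbf{z}, \alpha) - \Pi_{\mathcal{X}}^{\psi}(\mathbf{z}', \alpha)\| \leq \alpha\|\mathbf{z} - \mathbf{z}'\|_{*}$. This introduces the multiplicative factor $\alpha(n)$ on the two dual-deviation sums and, after combining the Lipschitz constants from the add-and-subtract step with the subgradient-moment bound, delivers the precise $(L+\xi)$ and $L$ coefficients in front of $\frac{1}{NK}\sum_{n,k}\alpha(n)\mathsf{E}\|\bar{\mathbf{z}}(n) - \mathbf{z}_k(n)\|$ and $\frac{1}{N}\sum_n \alpha(n)\mathsf{E}\|\bar{\mathbf{z}}(n) - \mathbf{z}_k(n)\|$, respectively.

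The main obstacle is careful bookkeeping around the subgradient bias: because $\hat{\bm{g}}_\ell(n)$ is only unbiased under the ZF design (cf.\ \eqref{eq:exp_sub_g_zf}--\eqref{eq:exp_sub_g_mmse}), the residual inner product $\mathsf{E}\langle \bm{g}_\ell(n) - \hat{\bm{g}}_\ell(n), \mathbf{x}_\ell(n) - \mathbf{x}^{\star}\rangle$ cannot be collapsed to zero by a tower-property argument in the MMSE case and must survive unreduced in the bound --- this is exactly why the lemma states it as an unabsorbed term on the last line rather than folding it into $\xi$. A subsidiary subtlety is that $\mathbf{x}_\ell(n)$ is measurable with respect to the noise history driving $\hat{\bm{g}}_\ell(s)$ for $s < n$, so conditioning on the natural filtration $\mathcal{F}_{n-1} = \sigma(\{\hat{\bm{g}}_\ell(s)\}_{s<n})$ is required before the unbiasedness identity (or its biased counterpart) can be invoked cleanly, and this conditioning must be done before, not after, the swap of anchors $\bar{\mathbf{y}}(n) \leftrightarrow \mathbf{x}_\ell(n)$ in the residual term.
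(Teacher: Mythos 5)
Your proposal is a sound reconstruction of the result, but it is worth being clear that the paper does not prove this lemma at all: it is imported verbatim by citation from [saha2021decentralized, Lemma 8] and then used as a black box inside the proof of Theorem~\ref{theo:1}. What you have sketched is essentially the proof that lives in that cited reference (and, underneath it, in the Duchi--Agarwal--Wainwright dual-averaging analysis): Jensen on the running average, the centralized anchor $\bar{\mathbf{y}}(n)=\Pi_{\mathcal{X}}^{\psi}(\bar{\mathbf{z}}(n),\alpha(n))$, the standard regret bound $\frac{\psi(\mathbf{x}^{\star})}{\alpha(N)}+\frac{1}{2}\sum_n\alpha(n-1)\|\cdot\|_*^2$ fed with the second-moment bound $\xi^2$ from \eqref{eq:var_sub_g}, the projection contraction $\|\Pi_{\mathcal{X}}^{\psi}(\mathbf{z},\alpha)-\Pi_{\mathcal{X}}^{\psi}(\mathbf{z}',\alpha)\|\leqslant\alpha\|\mathbf{z}-\mathbf{z}'\|_*$ to convert primal deviations into the $\alpha(n)\,\mathsf{E}\|\bar{\mathbf{z}}(n)-\mathbf{z}_k(n)\|$ terms with the $(L+\xi)$ and $L$ coefficients, and the unabsorbed bias inner product that survives precisely because $\hat{\bm{g}}_k(n)$ is biased under MMSE beamforming. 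So you do the work the paper outsources, which buys a self-contained argument and makes explicit why the last term cannot be folded into $\xi$ --- exactly the distinction the paper later exploits when it bounds that term separately for MMSE in Appendix~\ref{App_theo_1_2}. One bookkeeping caveat: in the source analysis the convexity step is applied at $\mathbf{x}_k(n)$ directly, so the residual $\langle\bm{g}_k(n)-\hat{\bm{g}}_k(n),\mathbf{x}_k(n)-\mathbf{x}^{\star}\rangle$ appears without your ``re-anchoring'' from $\bar{\mathbf{y}}(n)$ to $\mathbf{x}_\ell(n)$; your route forces you to absorb an extra cross term of the form $\mathsf{E}\bigl[\|\bm{g}_\ell(n)-\hat{\bm{g}}_\ell(n)\|\,\alpha(n)\|\bar{\mathbf{z}}(n)-\mathbf{z}_\ell(n)\|_*\bigr]$ into the deviation sums, which works but needs the gradient-error norm bounded (by roughly $L+\xi$) to recover the stated constants, so applying convexity at the local iterate first is the cleaner order of operations. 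Your filtration remark is correct and matches why the ZF case collapses the last term while the MMSE case does not.
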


For the ZF beamforming case, by \eqref{eq:exp_sub_g_zf}, one can find that the last term of \eqref{eq:sub_gap} is equal to zero. By taking the upper bound in Lemma \ref{lemma:est_err} into \eqref{eq:sub_gap} and with $\psi\left(\mathbf{x}^{\star}\right) \leq R^{2}$, $\alpha(n) = \frac{R \sqrt{1-\lambda_2}}{4 \xi \sqrt{n}}$, the convergence for ZF beamforming in Theorem \ref{theo:1} can be obtained. However, for the MMSE beamforming case, by \eqref{eq:exp_sub_g_mmse}, the last term of \eqref{eq:sub_gap} is nonzero due to the biased gradient. Nevertheless, this term can be bounded as 
\begin{equation}
\begin{aligned}
    \mathsf{E}\left[
        \frac{1}{NK} \sum_{n=1}^N \sum_{k=1}^K\left\langle\bm{g}_{k}(n)-\hat{\bm{g}}_{k}(n), \mathbf{x}_{k}(n)-\mathbf{x}^{\star}\right\rangle\right]
    &\leqslant
    \mathsf{E}\left[
        \frac{1}{NK} \sum_{n = 1}^N \sum_{k = 1}^K\left\|\bm{g}_{k}(n)-\hat{\bm{g}}_{k}(n)\right\| \left\|\mathbf{x}_{k}(n)-\mathbf{x}^{\star}\right\|\right]\\
    &\leqslant
        \frac{1}{NK} \sum_{n = 1}^N \sum_{k = 1}^K \mathsf{E}\left[\left\|\mathbf{\Delta}_k(n)\right\| \right] \mathsf{E}\left[\left\|\mathbf{x}_{k}(n)-\mathbf{x}^{\star}\right\|\right] \\
    &\leqslant
    \frac{\left\|\mathbf{x}^{\star}\right\|}{N}\sum_{n=1}^N \sqrt{\text{MSE}(n)/K}.
\end{aligned}
\end{equation}
The last inequality follows from $\mathsf{E}\left[\left\|\mathbf{x}_{k}(n)-\mathbf{x}^{\star}\right\|\right]\leqslant \mathsf{E}\left[\left\|\mathbf{x}_{k}(0)-\mathbf{x}^{\star}\right\|\right]$ and $\mathbf{x}_{k}(0) = 0, ~\forall k$. Taking this into \eqref{eq:sub_gap}, then Theorem \ref{theo:1} is proved.

\bibliographystyle{IEEEtran}

\end{document}